%% file: main.tex
\documentclass[conference]{IEEEtran}
\IEEEoverridecommandlockouts
 
\usepackage{cite}
\usepackage{amsmath,amssymb,amsfonts,amsthm}
\usepackage{algorithmic}
\usepackage{graphicx}
\usepackage{textcomp}
\usepackage{xcolor}
\usepackage{caption}
\usepackage{subcaption}
\usepackage{booktabs}
\usepackage{multirow}
\usepackage{pifont}
\usepackage{url}
\usepackage[colorlinks=true,allcolors=blue]{hyperref}

\IfFileExists{balance.sty}{\usepackage{balance}}{\newcommand{\balance}{}}
 
\graphicspath{{figures/}}

\IfFileExists{numbers.tex}{\input{numbers}}{%
  \typeout{WARNING: numbers.tex missing; run sim/plots/make_numbers.py}}

\newtheorem{theorem}{Theorem}
\newtheorem{lemma}[theorem]{Lemma}
\newtheorem{proposition}[theorem]{Proposition}
\newtheorem{corollary}[theorem]{Corollary}
\theoremstyle{definition}
\newtheorem{definition}[theorem]{Definition}

\theoremstyle{remark}
\newtheorem{remark}[theorem]{Remark}

\newcommand{\Rev}{\mathcal{R}}        
\newcommand{\Pap}{\mathcal{P}}        
\newcommand{\Aff}{\mathbf{A}}         
\newcommand{\Bid}{\mathbf{B}}        
\newcommand{\Asg}{\mathbf{X}}      
\newcommand{\actI}{\textsf{I}}      
\newcommand{\actH}{\textsf{H}}        

\newcommand{\Prob}{\mathbb{P}}
\newcommand{\dstar}{d^{\star}}
\newcommand{\qstar}{q^{\star}}
\newcommand{\deltastar}{\delta^{\star}}
\newcommand{\mstar}{m^{\star}}

\begin{document}

\title{Looking for Bidding Teammates: A Game-Theoretic\\
Model of Stranger Collusion in Peer Review}

\author{ 
\IEEEauthorblockN{Jinming Xing}
\IEEEauthorblockA{
\textit{North Carolina State University}\\
jxing6@ncsu.edu}
\and
\IEEEauthorblockN{Charlotte Brian}
\IEEEauthorblockA{
\textit{Independent Researcher}\\
immjomrtal@gmail.com}
}

% \author{\IEEEauthorblockN{Anonymous Submission}}

\maketitle

\input{sections/00_abstract}

\begin{IEEEkeywords}
Peer Review, Collusion, Game Theory, Reviewer Assignment, Paper Bidding,
Research Integrity 
\end{IEEEkeywords} 

\input{sections/01_introduction} 
\input{sections/02_background} 
\input{sections/03_model}
\input{sections/04_equilibrium}  
\input{sections/05_taxonomy}
\input{sections/06_experiments}
\input{sections/07_discussion}
\input{sections/08_conclusion}

\balance
\bibliographystyle{IEEEtran}
\bibliography{refs}

\appendices
\input{sections/09_appendix}

\end{document}

%% file: numbers.tex
% Generated by sim/plots/make_numbers.py. Do not edit by hand.
\newcommand{\dStar}{0.113}

\newcommand{\dTransition}{0.079}
\newcommand{\qStar}{0.571}
\newcommand{\qTransition}{0.627}

\newcommand{\nAgents}{20{,}000}
\newcommand{\collusionHand}{68.2}
\newcommand{\ringSizeHand}{4.6}
\newcommand{\effortCostHand}{0.40}
\newcommand{\collusionMachine}{91.9}

\newcommand{\effortCostMachine}{0.08}
\newcommand{\collusionAuthor}{95.2}
\newcommand{\ringSizeAuthor}{8.6}

\newcommand{\succNaive}{0.312}
\newcommand{\fOneNaiveFreq}{0.000}

\newcommand{\fOneNaiveCycle}{0.156}

\newcommand{\fOneCamoCycle}{0.322}

\newcommand{\fOneDistCycle}{0.032}

\newcommand{\succAff}{0.859}

\newcommand{\succComb}{0.673}

\newcommand{\fOneCombWorst}{0.216}
\newcommand{\cellNaiveFreq}{\textbf{0.000}}
\newcommand{\cellCamoFreq}{\textbf{0.000}}
\newcommand{\cellDistFreq}{0.079}
\newcommand{\cellAffFreq}{\textbf{0.000}}
\newcommand{\cellCombFreq}{0.127}
\newcommand{\cellNaiveDense}{\textbf{0.069}}
\newcommand{\cellCamoDense}{0.164}
\newcommand{\cellDistDense}{0.110}
\newcommand{\cellAffDense}{\textbf{0.069}}
\newcommand{\cellCombDense}{0.216}
\newcommand{\cellNaiveCycle}{0.156}
\newcommand{\cellCamoCycle}{0.322}
\newcommand{\cellDistCycle}{\textbf{0.032}}
\newcommand{\cellAffCycle}{0.156}
\newcommand{\cellCombCycle}{0.115}
\newcommand{\cellNaiveText}{0.047}
\newcommand{\cellCamoText}{\textbf{0.030}}
\newcommand{\cellDistText}{0.049}
\newcommand{\cellAffText}{0.085}
\newcommand{\cellCombText}{0.063}
\newcommand{\rhoZero}{0.003}
\newcommand{\evasionFPR}{5}
\newcommand{\evasionSeeds}{8}
\newcommand{\colluderFrac}{5}
\newcommand{\fOneBestOverall}{0.322}
\newcommand{\boostDyad}{3.2}
\newcommand{\acceptDyad}{28.2}
\newcommand{\acceptBase}{25.0}
\newcommand{\bestRingSize}{4}
\newcommand{\boostBestRing}{4.8}

\newcommand{\dispCamoFive}{51}

\newcommand{\dispDistFive}{13}

\newcommand{\dispCombFive}{70}
\newcommand{\dispCombTen}{129}

\newcommand{\qualityDrop}{0.002}
\newcommand{\qualityPTenDrop}{0.007}
\newcommand{\impactSeeds}{8}
\newcommand{\inflation}{2.0}

\newcommand{\residNoneNaive}{0.312}
\newcommand{\residNoneCamo}{0.271}
\newcommand{\residNoneDist}{0.255}
\newcommand{\residNoneComb}{0.673}
\newcommand{\qualityCycle}{100.0}

\newcommand{\residCycleNaive}{0.257}
\newcommand{\residCycleCamo}{0.226}
\newcommand{\residCycleDist}{0.249}
\newcommand{\residCycleComb}{0.665}
\newcommand{\qualityRand}{96.8}

\newcommand{\residRandComb}{0.311}
\newcommand{\qualityBoth}{96.8}

\newcommand{\residBothComb}{0.308}
\newcommand{\rhoNone}{0.673}
\newcommand{\kMinNone}{2.06}
\newcommand{\kMaxNone}{5.96}

\newcommand{\rhoRand}{0.311}
\newcommand{\kMinRand}{6.19}
\newcommand{\kMaxRand}{5.42}

\newcommand{\sensDTransHomog}{0.060}

\newcommand{\sensFracHomog}{100.0}
\newcommand{\sensDTransSpreadLo}{0.076}
\newcommand{\sensDTransSpreadHi}{0.081}
\newcommand{\sensQTransSpreadLo}{0.573}
\newcommand{\sensQTransSpreadHi}{0.650}
\newcommand{\sensBSpreadHi}{0.70}

\newcommand{\sensRateLo}{15}
\newcommand{\sensRateHi}{40}
\newcommand{\sensBoostRateLo}{2.4}
\newcommand{\sensBoostRateHi}{4.7}
\newcommand{\sensDispRateLo}{39}
\newcommand{\sensDispRateHi}{68}
\newcommand{\sensAcceptRateLo}{17.4}
\newcommand{\sensAcceptRateHi}{43.4}
\newcommand{\sensLambdaLo}{1}
\newcommand{\sensLambdaHi}{20}
\newcommand{\nStarLambdaLo}{4.9}
\newcommand{\nStarLambdaHi}{0.2}
\newcommand{\sensCeTransSpreadLo}{0.280}
\newcommand{\sensCeTransSpreadHi}{0.472}
\newcommand{\sensRegimeHandLo}{51.9}
\newcommand{\sensRegimeHandHi}{87.4}

\newcommand{\gridFPRLo}{1}
\newcommand{\gridFPRHi}{20}

\newcommand{\gridBestAnywhere}{0.567}
\newcommand{\gridBestCombAnywhere}{0.228}
\newcommand{\gridSeeds}{8}
\newcommand{\bidsHonest}{4.8}
\newcommand{\camoShareLo}{0.078}
\newcommand{\camoShareHi}{0.258}
\newcommand{\camoBidsHi}{13.4}
\newcommand{\camoRatioHi}{8}
\newcommand{\camoSuccLo}{0.232}
\newcommand{\camoSuccHi}{0.296}
\newcommand{\camoDenseLo}{0.096}
\newcommand{\camoDenseHi}{0.245}
\newcommand{\camoCycleLo}{0.205}
\newcommand{\camoCycleHi}{0.416}
\newcommand{\camoFreqHi}{0.000}
\newcommand{\gridFreqDistPrec}{1.000}
\newcommand{\gridFreqDistRec}{0.041}
\newcommand{\gridFreqDistFOne}{0.079}

\newcommand{\gridFreqCombRec}{0.068}
\newcommand{\gridFreqCombFOne}{0.127}
\newcommand{\gridCycleCamoPrec}{0.530}
\newcommand{\gridCycleCamoRec}{0.232}

\newcommand{\cycleFOneRingSixDeep}{0.095}
\newcommand{\cycleHonestShallow}{0.068}
\newcommand{\cycleHonestDeep}{0.180}
\newcommand{\cycleHonestShallowPct}{6.8}

\newcommand{\cycleSecShallow}{0.01}
\newcommand{\cycleSecDeep}{0.61}
\newcommand{\cycleBoundHi}{4}
\newcommand{\cycleRingHi}{6}
\newcommand{\cycleRingThreeShallow}{0.032}
\newcommand{\cycleRingThreeMatched}{0.248}
\newcommand{\cycleRingThreeDeep}{0.144}
\newcommand{\cycleRingFourMatched}{0.141}

%% file: sections/00_abstract.tex
\begin{abstract}
Paper bidding is the entry point to reviewer assignment at large computer
science conferences: reviewers declare interest in submissions, and an optimizer
combines them with automated affinity scores. Reviewers who have
never met recruit each other on public social platforms, exchange submission
identifiers, bid on each other's papers, and reciprocate with inflated scores.
Existing collusion models take the group as given and assume its members are established colleagues who already trust one another; open
recruitment removes that assumption together with the mechanism that made the
arrangement work. We give the first game-theoretic model of collusion
\emph{formation} in peer review, the \emph{Mutual Bidding Dilemma}: a
four-stage game covering recruitment, the exchange of identifiers under the risk
of being reported, bidding that neither party can verify, and reciprocal
reviewing. The model predicts the arrangement cannot form: once a colluder is assigned to
a partner's paper, writing the inflated review is pure cost, because the benefit
they care about is delivered by the partner's separate decision about their own
paper. Reciprocation is never individually rational, for any payoffs, and the
arrangement unwinds to the recruitment post. What
closes the gap is enforcement rather than incentives: authors always see the reviews of their own paper, deadlines recur every few months, and the recruitment group remembers who
reciprocated. We give the condition under which reciprocal inflation becomes
sustainable, the detection rate above which no partnership survives, and show
review-writing effort enters both. On a calibrated
end-to-end conference simulation, no detector we test exceeds
$F_1 = \fOneBestOverall$ against an attacker who camouflages bids, distributes
them around a ring, and manipulates affinity; a two-person arrangement is worth
$\boostDyad$ percentage points of acceptance probability; and the harm is
distributional rather than aggregate, with $\dispCombFive$ honest papers
displaced while mean accepted quality moves by $\qualityDrop$, so no summary
statistic a venue publishes reveals it. Randomized assignment is
the one defense that reaches the enforcement itself, making a partner who never
bid indistinguishable from one who bid and lost.
\end{abstract}

%% file: sections/01_introduction.tex
\section{Introduction}
\label{sec:intro}

Machine learning conferences have outgrown the reviewer pool that serves them.
NeurIPS received roughly 2{,}400 submissions a decade ago and more than
21{,}000 in the most recent cycle, with ICLR, ICML and AAAI on similar
trajectories~\cite{ref:conf-growth}, and the strain on review quality is well
documented~\cite{ref:shah-survey,ref:nips-experiment}. No
program chair can assign reviewers by hand at that scale, so assignment is
delegated to an optimizer maximizing total reviewer-paper affinity subject to
load and coverage constraints~\cite{ref:tpms,ref:assignment-opt}. It takes two
inputs: affinity scores computed from text~\cite{ref:tpms,ref:specter}, and bids
in which reviewers declare which submissions they want. Bids exist because
affinity scores are imperfect and because reviewers who receive papers they
asked for write better reviews. They are also unpriced and unverified: a
reviewer pays nothing to declare interest in a paper they have none in, and no
part of the pipeline tests a declared interest against a genuine one.

\subsection{The phenomenon}

That gap is now being exploited openly. On public social platforms, including
WhatsApp groups, Telegram channels, and X, reviewers post messages whose subject
lines say, in effect, \emph{looking for bidding teammates} for a named
conference, with the terms stated in the body: I bid on your submission, you bid
on mine, and whoever is assigned returns a high score.
Fig.~\ref{fig:screenshots} shows three such threads with a selection of the
replies each drew. The posts name the venue, which we redact, and usually the
subject area, and they invite strangers to reply.

\begin{figure}[htbp]
\centering
\includegraphics[width=\columnwidth]{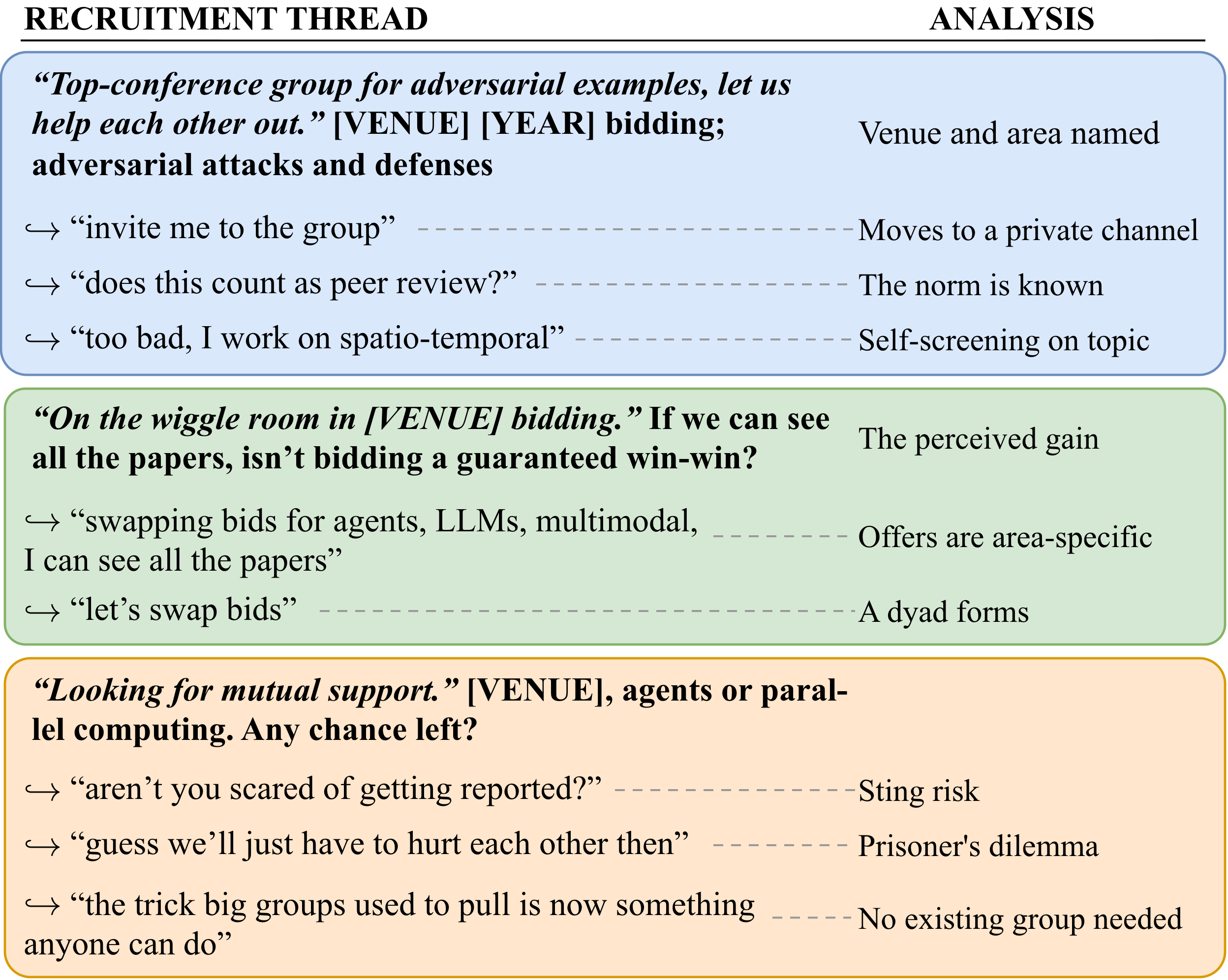}
\caption{Three public recruitment threads soliciting reciprocal bidding
partners, with selected replies ($\hookrightarrow$). We typeset the text rather
than reproduce screenshots, because the platform is identifiable from its
interface; posts are translated and lightly paraphrased so that no thread can
be recovered by searching a quoted string. No handles, avatars, timestamps,
group identifiers, or venue names appear. The right column is our annotation,
naming what each line instantiates in the model of \S\ref{sec:model}.}
\label{fig:screenshots}
\end{figure}

This is observed behavior rather than a hypothetical threat model: the posts are
public, easy to find, and recur every submission cycle. Organized manipulation
of computer science peer review is not itself new, since collusion rings among
acquainted researchers were described publicly several years
ago~\cite{ref:collusion-ring-report}. What is new is that recruitment has moved
into the open.

The replies are informative beyond confirming that strangers answer. Responders
sort themselves by subject area, which is what makes a partner's paper
reviewable at all and which we impose on the simulator in
\S\ref{sec:experiments}; they raise the risk of being reported, which is the
sting risk of Stage~2; and one of them states the reviewing-stage dilemma
outright, before we model it.

\subsection{Why open recruitment is a different problem}

The peer review collusion literature is small and almost entirely the product of
one line of work~\cite{ref:jecmen2020,ref:jecmen2022,ref:jecmen2023,
ref:jecmen2024,ref:hsieh2025}. Across it the colluding group is taken as given:
reviewers who already know each other, typically lab mates or frequent
co-authors, whose problem is to be assigned to each other's papers without being
caught. Detection and mitigation are then posed as questions about the bidding
matrix such a group produces. Parts of that literature are game-theoretic, but
the game they analyze begins after the group exists. None asks how the group
comes into being, and until recently there was little reason to: colluders were
colleagues, and the relationship predated the conference. Open recruitment makes
formation the interesting part, and formation is what we model.

Recruitment among strangers breaks the premise that makes that framing work. Two
people who met an hour ago in a group chat have no shared history, no
institutional bond, and no expectation of future contact. They are also, in the
formal sense that conflict-of-interest policies use, not in conflict with one
another, so no automated check separates them. Four problems follow that
colleagues do not face. \emph{First-mover
exposure}: somebody must send a submission identifier first, revealing
authorship, intent, and an accusation ready to forward to a program chair.
\emph{Sting risk}: the reply may come from a genuine partner, a curious
bystander, or someone collecting evidence. \emph{Unverifiable bidding}: bids are
private to the conference system, so failure to be assigned is consistent both
with an honest bid that lost and with no bid at all. \emph{Absent repeated-game
trust}: colleagues are disciplined by an ongoing relationship and strangers are
not.

These frictions pose a question distinct from the one the literature has
addressed. That reviewers want higher scores on their own submissions accounts
for the demand for such arrangements; it does not account for their stability.
Consider a colluder already assigned to a partner's paper. The benefit at stake,
a favorable review on their own submission, is produced by the partner's
separate decision about a separate paper, and nothing the colluder writes
affects it. Writing the inflated review is pure cost at the moment of writing:
the effort of making it survive the discussion phase, plus whatever sanction
risk it carries. Reporting honestly dominates, both parties anticipate as much,
and the anticipation propagates backwards through bidding, identifier exchange,
and the recruitment post itself. \S\ref{sec:equilibrium} shows this holds for
every value of the payoffs, including arbitrarily large ones: raising the value
of a publication makes collusion more attractive to enter without making
reciprocation rational once entered, so the operative constraint is enforcement
rather than incentives.

\subsection{Why current defenses do not settle the question}

Existing defenses act on the mechanics of assignment and therefore leave that
constraint untouched. Cycle-free assignment forbids two-cycles in the assignment
graph~\cite{ref:cyclefree}, which a three-party ring evades by construction.
Randomized assignment caps the probability that any target pair is
matched~\cite{ref:jecmen2020}, lowering the expected return on collusion without
changing its sign, at a documented cost in assignment quality. Detection applied
to bidding data loses most of its power once colluders mix collusive bids with
genuine ones~\cite{ref:jecmen2024}, and text-matching defenses address a
separate attack surface in which affinity scores rather than bids are
manipulated~\cite{ref:hsieh2025}. Each constrains what the optimizer may output;
none reaches the relationship that produced its inputs. A defense lowering the
odds of a collusive assignment is priced, in the literature proposing it, purely
as a reduction in the attacker's expected return, and
\S\ref{sec:equilibrium} and \S\ref{sec:e4} show that it also destroys the
colluders' ability to monitor each other.

\subsection{Contributions}

\begin{enumerate}
  \item \textbf{The Mutual Bidding Dilemma.} We give the first game-theoretic
  model of collusion \emph{formation} in peer review, and the first to treat the
  colluders as strangers rather than as colleagues who already trust one another
  (Table~\ref{tab:related}). The model is a four-stage extensive-form game
  covering recruitment signaling, information exchange under sting risk, bidding
  as a hidden action, and reciprocal reviewing (\S\ref{sec:model}). It is
  addressed to a phenomenon the literature has not modeled at all: open
  recruitment of bidding partners on public social platforms.

  \item \textbf{An unraveling result and its resolution.} The reviewing stage is
  a Prisoner's Dilemma for every parameter setting, so the one-shot game has a
  unique subgame-perfect equilibrium without collusion
  (Theorem~\ref{thm:unravel}), and observed collusion is evidence about
  enforcement technology rather than payoff magnitudes. We identify that
  technology and show that reciprocal inflation is sustainable exactly above a
  discount threshold $\deltastar$ (Theorem~\ref{thm:sustain}), that the first
  mover reveals an identifier only above a trust threshold $\qstar$
  (Proposition~\ref{prop:trust}), and that no discount factor sustains collusion
  once per-review detection exceeds $\dstar$ (Theorem~\ref{thm:dstar}). Effort
  cost enters $\deltastar$ directly, turning the claim that language models
  amplify collusion into a comparative static (\S\ref{sec:equilibrium}).

  \item \textbf{Attack taxonomy and evasion analysis.} Four bidding and four
  reviewing strategies mapped against four detector families, with simulation
  showing that camouflaged and distributed bidding together defeat every
  single-signal detector we test (\S\ref{sec:taxonomy}).

  \item \textbf{Impact quantification and defense implications.} A calibrated
  end-to-end conference simulation measuring the acceptance boost colluders
  obtain, the honest papers they displace, and the residual harm under existing
  defenses, translating the equilibrium conditions into levers a program chair
  controls (\S\ref{sec:experiments}, \S\ref{sec:discussion}).
\end{enumerate}

%% file: sections/02_background.tex
\section{Background and Related Work}
\label{sec:background}

\subsection{The conference review pipeline}

A large conference reviews in six stages: submission, bidding, assignment,
reviewing, discussion and meta-review, and decision. Two are exploitable by the
manipulation we study, and a third makes it enforceable.

\textbf{Bidding.} Reviewers browse titles and abstracts, usually filtered by
subject area, and record a preference over an effective choice set of
\emph{eager}, \emph{willing}, \emph{neutral}, \emph{not willing}, and
\emph{conflict}. A bid is cheap talk in the technical sense: submitting or
misreporting it costs nothing, and the system cannot test it against the
reviewer's actual interest. Participation is uneven, and many submissions
receive very few positive bids. That sparsity is what
gives a collusive bid its leverage.

\textbf{Assignment.} The assignment stage combines bids with an affinity score
$A_{ij}$ estimating how well reviewer $i$ matches paper $j$, computed from
TF-IDF similarity between the reviewer's publication record and the
submission~\cite{ref:tpms} or, in current deployments, from document embeddings
such as SPECTER~\cite{ref:specter,ref:specter2} wired into the venue's matching
service. The optimizer solves
\begin{equation}
\label{eq:assignment}
\begin{aligned}
\max_{\Asg \in \{0,1\}^{M\times N}} \;\; & \sum_{i,j} w(A_{ij}, B_{ij})\, X_{ij} \\[-2pt]
\text{s.t.} \;\; & \textstyle\sum_j X_{ij} \le L_i \;\; \forall i, \quad
                   \textstyle\sum_i X_{ij} = k \;\; \forall j, \\[-2pt]
                 & X_{ij} = 0 \;\; \forall (i,j) \in \mathcal{C},
\end{aligned}
\end{equation}
following~\cite{ref:taylor-assignment,ref:peerreview4all}, where $L_i$ is
reviewer $i$'s load cap, $k$ the required reviews per paper, and $\mathcal{C}$
the conflict-of-interest set. The
weight $w(\cdot)$ is monotone in both arguments and in most deployments gives
bids substantial influence, so an eager bid can outweigh a sizable affinity
deficit. That influence is what makes bidding worth manipulating.

\textbf{Discussion.} Most venues release initial reviews to the co-reviewers of
the same paper and open a period in which scores may be revised, and authors see
the reviews of their own paper. This stage is normally analyzed as a quality
mechanism, alongside score calibration and reviewer
disagreement~\cite{ref:stelmakh-bias}.
\S\ref{sec:equilibrium} shows it also functions as a monitoring device,
necessary to any explanation of why stranger collusion is stable.

\subsection{Manipulation, mitigation, and detection}

Jecmen et al.~\cite{ref:jecmen2020} introduced randomized reviewer assignment as
a mitigation: capping the probability that any reviewer-paper pair is matched
bounds the success probability of a collusive attempt, at a measured cost in
assignment quality of 5--15\%. A follow-up~\cite{ref:jecmen2022}
shows expertise matching and manipulation robustness cannot be maximized
simultaneously. Cycle-free reviewing~\cite{ref:cyclefree} instead forbids cycles
of length up to $k$ in the assignment graph; at $k=2$ this rules out the direct
reciprocity the recruitment posts describe, but the constraint is expensive for
larger $k$ and a ring of size $k+1$ evades it by construction. Randomized
transparency, releasing a random subset of bids after the conference closes, is a
further deterrent; in our framework it acts
on the detection probability, which \S\ref{sec:equilibrium} shows is the correct
target. All of these take the colluding set as given.

The detection side is less encouraging. Jecmen et al.~\cite{ref:jecmen2023}
released a simulated dataset of malicious bidding, and a subsequent
study~\cite{ref:jecmen2024} found that most off-the-shelf fraud detection
algorithms fail on it once colluders camouflage, placing collusive bids inside a
larger set of genuine bids on topically similar papers. Our experiments
reproduce that finding and extend it to strategy combinations the original
evaluation did not consider. Separately, Hsieh et al.~\cite{ref:hsieh2025}
showed the text-matching component is manipulable independently of bidding:
curating a reviewer's public publication list, or inserting background sentences
matched to a target reviewer's corpus, moves the affinity score enough to make
assignment likely. The two manipulations are complements, since raising
$A_{ij}$ raises the probability that a collusive bid succeeds, which enters our
payoffs through $\rho$ and by Theorem~\ref{thm:sustain} makes collusion easier
to sustain.

\begin{table}[tbp]
\centering
\caption{Positioning relative to prior work. \emph{Formation} means the paper
models how a collusive group comes into being rather than taking it as given;
\emph{strangers} means it does not assume prior trust between colluders;
\emph{impact} means it quantifies downstream effects on acceptance decisions.}
\label{tab:related}
\footnotesize
\setlength{\tabcolsep}{3pt}
\begin{tabular}{@{}l@{\hspace{4pt}}ccccc@{}}
\toprule
& \rotatebox{60}{Formation} & \rotatebox{60}{Strangers}
& \rotatebox{60}{Game-theor.} & \rotatebox{60}{Impact}
& \rotatebox{60}{Multi-stage} \\
\midrule
Randomized assign.\ \cite{ref:jecmen2020}   & \ding{55} & \ding{55} & \checkmark & \ding{55} & \ding{55} \\
Manipulation tradeoffs \cite{ref:jecmen2022}  & \ding{55} & \ding{55} & \checkmark & \ding{55} & \ding{55} \\
Malicious bid data \cite{ref:jecmen2023}  & \ding{55} & \ding{55} & \ding{55} & \ding{55} & \ding{55} \\
Ring detection \cite{ref:jecmen2024}          & \ding{55} & \ding{55} & \ding{55} & \ding{55} & \ding{55} \\
Text-match attack \cite{ref:hsieh2025}     & \ding{55} & \ding{55} & \ding{55} & \ding{55} & \ding{55} \\
Cycle-free review \cite{ref:cyclefree}     & \ding{55} & \ding{55} & \ding{55} & \ding{55} & \ding{55} \\
Peer prediction \cite{ref:peerprediction}     & \ding{55} & \ding{55} & \checkmark & \ding{55} & \ding{55} \\
\midrule
This paper & \checkmark & \checkmark & \checkmark & \checkmark & \checkmark \\
\bottomrule
\end{tabular}
\end{table}

\subsection{Game theory in peer review}

Mechanism design for peer review has concentrated on eliciting honest reports
from reviewers with no interest in any specific outcome, using peer prediction
and related scoring rules~\cite{ref:peerprediction,ref:mechdesign}, and on
strategic behavior where reviewers are themselves
authors~\cite{ref:strategic-review}. Repeated-game models of reviewer behavior
exist~\cite{ref:repeated-review}, but model the reviewer's relationship with the
venue rather than with another reviewer. Our analysis rests on standard
repeated-game machinery~\cite{ref:fudenberg-tirole}, including the observation
that finite horizons need not unravel once types are uncertain. To our knowledge
no prior model treats the
formation of a collusive relationship as the object of analysis, the step open
recruitment has made cheap. Table~\ref{tab:related} states the positioning
along the axes on which this paper departs from prior work.

\subsection{Language models and review integrity}

Language models can now produce review text that human readers find
useful~\cite{ref:llm-review-quality}, and submissions carrying hidden prompts
aimed at machine reviewers have been
documented~\cite{ref:prompt-injection-papers}. Estimates place the fraction of
reviews at recent top venues showing signs of language model assistance between
15\% and 21\%~\cite{ref:llm-reviews,ref:llm-detect}, and detection
of machine-written review text is unreliable at the false positive rates a
program chair can tolerate. Our interest is more specific: writing a favorable
review that survives co-reviewers and a meta-reviewer is the one genuinely
costly step in the collusion sequence, and the step language models make cheap.
The author-drafted variant of \S\ref{sec:taxonomy} is harder still, and beyond
the reach of any machine-text detector.

%% file: sections/03_model.tex
\section{The Mutual Bidding Dilemma}
\label{sec:model}

\subsection{Setting and notation}

A conference receives $N$ submissions $\Pap$ and employs $M$ reviewers $\Rev$.
Reviewer $i$ authors the set $\Pap_i \subset \Pap$ and is assigned at most $L_i$
papers; each paper receives $k$ reviews. The affinity matrix
$\Aff \in [0,1]^{M \times N}$ comes from the venue's text-matching system, the
bid matrix $\Bid \in \mathcal{V}^{M \times N}$ records preferences over a finite
vocabulary $\mathcal{V}$, and the assignment $\Asg = f(\Aff,\Bid)$ solves
\eqref{eq:assignment}. Reviewer $i$ then reports a score
$s_{ij} \in [\underline{s},\overline{s}]$ for each assigned paper $j$, and a
paper is accepted if its aggregate score exceeds a threshold set to meet the
venue's target acceptance rate. Table~\ref{tab:notation} collects the game
parameters, all per-dyad and per-round unless stated otherwise.

\begin{definition}[Collusive dyad]
\label{def:dyad}
A pair of reviewers $(i,j)$ forms a \emph{collusive dyad} if they satisfy no
declared conflict of interest, have no co-authorship or institutional
relationship, have not interacted before the current recruitment episode, and
agree to bid on each other's submissions with the intention of returning
inflated scores conditional on assignment.
\end{definition}

The third clause separates our object of study from the collusion modeled in
prior work, and makes the analysis non-trivial by removing the trust those
models assume.

\begin{table}[htbp]
\centering
\caption{Game parameters.}
\label{tab:notation}
\footnotesize
\setlength{\tabcolsep}{3pt}
\begin{tabular}{@{}l p{0.79\columnwidth}@{}}
\toprule
Symbol & Meaning \\
\midrule
$b$        & value to a colluder of one inflated review on their own paper \\
$\rho$     & $\Prob(\text{assigned to target paper} \mid \text{collusive bid})$ \\
$\rho_0$   & $\Prob(\text{assigned to target paper} \mid \text{no bid})$ \\
$d$        & per-inflated-review probability of detection \\
$\pi$      & sanction if detected \\
$\kappa$   & $= d\pi$, expected sanction cost of one inflated review \\
$c_e$      & extra effort to write a convincing inflated review \\
$c_b$      & opportunity cost of spending a bid slot collusively \\
$c_s$      & cost of posting or answering a recruitment signal \\
$\mu$      & probability a public recruitment post is observed by the venue \\
$\pi_s$    & sanction if a recruitment post is attributed to its author \\
$q$        & $\Prob(\text{responder is a genuine colluder})$ \\
$L$        & loss to the first mover if reported at the exchange stage \\
$R$        & payoff to a responder from reporting rather than reciprocating \\
$\delta_r$ & discount factor between consecutive submission deadlines \\
$m$        & number of partners a reviewer maintains in one round \\
\bottomrule
\end{tabular}
\end{table}

\subsection{The four-stage game}

The dyad plays a four-stage extensive-form game. Stages are ordered by the
conference calendar, and each is played under a distinct information structure,
which is the substance of the model. Nature moves first, at Stage~1, to draw the
responder's type. Bids are private, so Stage~3 is played under a non-trivial
information set, and the Stage~4 subgame is reached only on the assignment
event, which has probability $\rho^2$ under independent assignment.

\textbf{Stage 1: recruitment signaling.} The initiator chooses whether to post a
recruitment signal reaching $n$ potential partners, at cost $c_s$. The post is
itself evidence: the venue observes and attributes it with probability $\mu$,
carrying sanction $\pi_s$. Each reader responds independently with probability
$\lambda$, so the initiator meets at least one responder with probability
$1-(1-\lambda)^n$. Nature draws the responder's type: with probability $q$ a
genuine colluder, and with probability $1-q$ someone who intends to gather
evidence and report, which we call a sting. The post is a costly signal, and the
initiator cannot distinguish the types at this stage.

\textbf{Stage 2: information exchange.} Targeted bidding requires knowing which
submission to bid on, and simultaneous exchange is unavailable to strangers
without a commitment device, so exchange is sequential. The first mover chooses
from $\{\texttt{reveal}, \texttt{withdraw}\}$; the responder, having received a
real identifier, chooses from $\{\texttt{reciprocate}, \texttt{fake},
\texttt{report}\}$. This is a trust game with the roles reversed, since what the
first mover hands over is not money but
evidence: revealing transmits identity, authorship, and documentary proof of
intent at once. A responder who fakes obtains nothing directly, because without
a real identifier the first mover cannot reciprocate; one who reports obtains
$R$, whether an explicit reward, standing with the venue, or the competitive
benefit of removing a rival submission. The first mover who is reported loses
$L$.

\textbf{Stage 3: bidding.} Each party independently records a bid on the
partner's paper, choosing from $\{\texttt{bid}, \texttt{no bid}\}$ at
opportunity cost $c_b$ for the former. Two features distinguish this stage from
a standard cooperation problem. The action is hidden, since bids are visible
only to the venue, and it is not decisive, since the optimizer takes bids as one
input among several and bidding yields assignment only with probability $\rho$
against a baseline $\rho_0 \ll \rho$. The combination is a moral hazard problem
with a confounded signal: a partner
who fails to be assigned may never have bid, or may have bid and lost, an event
of probability $1-\rho$ even under full cooperation, and
\S\ref{sec:equilibrium} shows this confound is the binding constraint on
enforcement.

\textbf{Stage 4: reviewing.} Conditional on assignment, each party chooses
$\actI$ (report a score above their honest assessment by a margin sufficient to
change the decision) or $\actH$ (report honestly). An inflated review that will
survive co-reviewer discussion and meta-review costs $c_e$ beyond the effort the
reviewer owes anyway, and carries expected sanction $\kappa = d\pi$. Restricting
attention to mutual assignment gives the payoffs of Table~\ref{tab:pd}.

\begin{table}[htbp]
\centering
\caption{Stage~4 payoffs conditional on mutual assignment. Row player is
reviewer $i$, column player is reviewer $j$. $\actI$ denotes an inflated score
and $\actH$ an honest one.}
\label{tab:pd}
\begin{tabular}{c|cc}
\toprule
 & $\actI$ & $\actH$ \\
\midrule
$\actI$ & $(b-\kappa-c_e,\; b-\kappa-c_e)$ & $(-\kappa-c_e,\; b)$ \\
$\actH$ & $(b,\; -\kappa-c_e)$             & $(0,\; 0)$ \\
\bottomrule
\end{tabular}
\end{table}

\begin{lemma}[Stage 4 is a Prisoner's Dilemma]
\label{lem:pd}
For all $b>0$ and $\kappa+c_e>0$, the Stage~4 game of Table~\ref{tab:pd} is a
Prisoner's Dilemma: $\actH$ strictly dominates $\actI$ for both players, mutual
$\actH$ is the unique Nash equilibrium, and mutual $\actI$ Pareto-dominates it
whenever $b > \kappa + c_e$.
\end{lemma}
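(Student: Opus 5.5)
The argument is a direct comparison of entries in Table~\ref{tab:pd}, carried out in three steps that match the three claims of the lemma. For strict dominance, fix the column player's action and compare the row player's payoffs from $\actH$ and $\actI$. Against $\actI$, honesty yields $b$ and inflation yields $b-\kappa-c_e$, so the gap is $\kappa+c_e$. Against $\actH$, honesty yields $0$ and inflation yields $-\kappa-c_e$, so the gap is again $\kappa+c_e$. The gap does not depend on the opponent's action, because the benefit $b$ enters the row player's payoff only through the column player's choice. Since $\kappa+c_e>0$ by hypothesis, $\actH$ strictly dominates $\actI$ for the row player. The table is symmetric, since swapping players transposes it, so the same holds for the column player.

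Uniqueness of the Nash equilibrium comes next. A strictly dominated action is never a best response to any mixed strategy of the opponent, because the payoff gap $\kappa+c_e$ is the same against every pure action and therefore against every mixture. So in any equilibrium, pure or mixed, each player puts probability zero on $\actI$, and $(\actH,\actH)$ is the only candidate. It is an equilibrium because $\actH$ is a best response to $\actH$. This uses only $\kappa+c_e>0$; the hypothesis $b>0$ is not needed here.

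For the Pareto claim, compare $(\actI,\actI)$, which gives each player $b-\kappa-c_e$, with $(\actH,\actH)$, which gives each player $0$. Both players are strictly better off exactly when $b>\kappa+c_e$. To finish the verification that the game is a Prisoner's Dilemma in the standard sense, check the usual payoff ordering $T>R>P>S$, here $b > b-\kappa-c_e > 0 > -\kappa-c_e$. The first and last inequalities follow from $\kappa+c_e>0$, and the middle one is the Pareto condition. The condition $b>0$ ensures that the temptation payoff is positive, so the dilemma is non-degenerate.

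There is no substantive obstacle. The one point needing care is the scope of the claims. Dominance and uniqueness hold for every $b$ and every $\kappa+c_e>0$, while the Prisoner's Dilemma ordering, and hence the Pareto comparison, holds only under $b>\kappa+c_e$. When $b\le\kappa+c_e$ the game is still dominance-solvable with outcome $(\actH,\actH)$, but it is not a dilemma, so the proof should state explicitly that the ``for all'' applies to dominance and uniqueness.
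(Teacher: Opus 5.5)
Your proposal is correct and follows the paper's proof: the same two fixed-opponent comparisons give a constant gap of $\kappa+c_e$ independent of $b$, symmetry handles the other player, and comparing $b-\kappa-c_e$ with $0$ gives the Pareto claim. Your explicit exclusion of mixed equilibria and your $T>R>P>S$ check are minor additions that the paper leaves implicit, and your caveat that the dilemma label itself requires $b>\kappa+c_e$ is a fair reading of the statement's scope.
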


\begin{proof}[Proof sketch]
Fix $j$'s action. If $j$ plays $\actI$, player $i$ compares $b-\kappa-c_e$
against $b$; if $j$ plays $\actH$, player $i$ compares $-\kappa-c_e$ against
$0$. In both cases the difference is $-(\kappa+c_e)<0$, so $\actH$ strictly
dominates, independently of $b$. The remaining claims follow. Full argument in
Appendix~\ref{app:proofs}.
\end{proof}

The independence from $b$ is the substantive content: $b$ is delivered by the
\emph{partner's} action and therefore cancels from the comparison governing
one's own, so no acceptance benefit, however large, makes reciprocation
individually rational.

\subsection{Payoffs}

Collecting the stages, the expected payoff to a party who signals, exchanges
identifiers with a genuine partner, bids, and expects reciprocation with
probability $\sigma$ is
\begin{equation}
\label{eq:utility}
U = \underbrace{\rho\,\sigma\, b}_{\text{expected gain}}
  \;-\; \underbrace{\rho\left(\kappa + c_e\right)}_{\text{cost of reciprocating}}
  \;-\; \underbrace{c_b}_{\text{bid slot}}
  \;-\; \underbrace{c_s + \mu\pi_s}_{\text{signaling}} .
\end{equation}
Under mutual cooperation, $\sigma=1$ and we write the per-round gain and cost as
\begin{equation}
\label{eq:gw}
g = \rho b, \qquad w = \rho(\kappa+c_e) + c_b,
\end{equation}
so the per-round net is $u_C = g - w$ and collusion is worth entering only if
$u_C > c_s + \mu\pi_s$.

Two parameters in \eqref{eq:gw} carry most of the paper's message. The first is
$\rho$, which the attack strategies of \S\ref{sec:taxonomy} raise and the
defenses of \S\ref{sec:experiments} lower. The second is $c_e$. A reviewer
writing a convincing favorable review by hand spends hours, because it must
withstand a discussion phase in which co-reviewers who read the paper honestly
will disagree. One using a language model spends minutes, and one who receives a
draft written by the paper's own authors spends none, with text more specific
and more technically accurate than either alternative.
\S\ref{sec:equilibrium} shows $c_e$ enters the sustainability threshold
directly.

%% file: sections/04_equilibrium.tex
\section{Equilibrium Analysis}
\label{sec:equilibrium}

\subsection{The one-shot game unravels}

Consider first the benchmark matching the phenomenon most directly: two
strangers, one conference, and no observation of each other after the reviews
are submitted.

\begin{theorem}[Unraveling]
\label{thm:unravel}
In the one-shot Mutual Bidding Dilemma with $c_s > 0$, $c_b > 0$ and
$\kappa + c_e > 0$, the unique subgame-perfect equilibrium is: no recruitment
signal is posted, no submission identifier is revealed, no collusive bid is
placed, and both parties report honestly. Collusion does not occur for any
values of $b$, $\rho$, $q$, or $\pi$.
\end{theorem}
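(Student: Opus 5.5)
The argument is backward induction through the four stages, starting from the reviewing subgame and working back to the recruitment post.

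\textbf{Stage~4.} By Lemma~\ref{lem:pd}, whenever the Stage~4 subgame is reached, $\actH$ strictly dominates $\actI$ for each party. This holds for every $b$, since $b$ cancels from the comparison, and requires only $\kappa+c_e>0$. So any subgame-perfect profile must prescribe mutual $\actH$ at every Stage~4 information set. That includes the asymmetric assignment events, where only one party is assigned to the other's paper. There the assigned reviewer's choice affects only their own payoff, by $-(\kappa+c_e)$ if they inflate. Dominance therefore applies directly, and I will state this separately from Table~\ref{tab:pd}, since that table covers only mutual assignment. In equation \eqref{eq:utility} this sets the reciprocation probability to $\sigma=0$.

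\textbf{Stage~3.} With $\sigma=0$, a collusive bid changes one's own continuation payoff only through the probability of being assigned to the partner's paper, and that assignment now yields zero (honest review) whether or not one bid. The bid has no effect on one's own paper, because assignment to one's own paper is governed by the partner's bid. So $\texttt{bid}$ yields $-c_b$ relative to $\texttt{no bid}$, and with $c_b>0$, $\texttt{no bid}$ is strictly optimal at every Stage~3 information set. The hidden-action structure does not matter here, because the comparison holds pointwise in the partner's strategy.

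\textbf{Stage~2.} The responder's three options are $\texttt{reciprocate}$, $\texttt{fake}$ and $\texttt{report}$. Given the continuation, reciprocating yields no benefit: no bids follow, so the exchange is worthless. For the genuine type it weakly ties with faking. The sting type reports, gaining $R$ with probability $1-q>0$, unless $q=1$. The first mover's continuation value from $\texttt{reveal}$ is therefore at most $-(1-q)L\le 0$, against $0$ from $\texttt{withdraw}$. I expect this step to be the main obstacle. It gives strict preference for $\texttt{withdraw}$ only when $q<1$ and $L>0$. For the theorem to hold ``for any $q$'', I would instead argue that revealing is payoff-irrelevant, because no collusive bid ever follows. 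The claimed uniqueness then requires either an implicit cost of revelation (the evidentiary exposure, $L>0$ with $q<1$) or treating $\texttt{withdraw}$ as selected by the strict Stage~1 comparison. I would handle this by noting that the outcome (no collusion) is unique regardless, and by stating strictness under $q<1$.

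\textbf{Stage~1.} The continuation value of posting is at most $0$, while posting costs $c_s+\mu\pi_s\ge c_s>0$. Not posting is therefore strictly optimal, and this pins down the unique equilibrium path. Since $b,\rho,q,\pi$ entered only through terms shown to vanish or to be nonpositive, the conclusion is independent of them.
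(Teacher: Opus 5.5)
Your proposal is correct and follows the paper's proof essentially step for step, including how you handle the Stage~2 case $q=1$: the paper likewise concedes that revealing is then only weakly optimal and falls back on the fact that no collusive bid follows, so what is unique there is the equilibrium outcome rather than the off-path Stage~2 strategy. Your explicit treatment of one-sided assignment at Stage~4, which Table~\ref{tab:pd} does not cover, is a small tightening that the paper leaves implicit.
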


\begin{proof}[Proof sketch]
Backward induction. Lemma~\ref{lem:pd} makes $\actH$ strictly dominant at
Stage~4; a party anticipating an honest partner then obtains $-c_b < 0$ from
bidding at Stage~3; anticipating no bid, revealing at Stage~2 yields
$-(1-q)L \le 0$ against $0$; and at Stage~1 the continuation is at most $0$
against a cost $c_s + \mu\pi_s > 0$. Each step is strict, so the equilibrium is
unique. Appendix~\ref{app:proofs}.
\end{proof}

Theorem~\ref{thm:unravel} is the pivot of the paper. Collusion of exactly the
kind in Fig.~\ref{fig:screenshots} is observed at increasing volume, and the
one-shot model says it should never occur. The resolution cannot lie in payoff
magnitudes, since Lemma~\ref{lem:pd} holds for every magnitude, so it must lie
in the two assumptions of the benchmark: that the interaction happens once and
that nothing is observed afterwards. Any explanation of stranger collusion
therefore requires an enforcement technology, a mechanism by which failure to
reciprocate imposes a cost outside the current stage game.

\subsection{Where the enforcement comes from}

Three features of the modern review ecosystem supply it. First, \emph{reviews of
one's own paper are always observed}, since every venue returns reviews to
authors, so a colluder learns with certainty whether an assigned partner
delivered and Stage~4 is perfectly monitored ex post. This is not true of
Stage~3, and the asymmetry drives Proposition~\ref{prop:tolerance}. Second,
\emph{the conference calendar supplies repetition}: major venues have deadlines
two to three months apart, so $\delta_r$ is dominated by the probability the
relationship continues rather than by impatience. Third, \emph{the platform
supplies reputation}. Recruitment groups persist across cycles, members
accumulate standing, and someone who takes a partner's bid and returns an honest
score can be named inside the group. The platform performs for strangers the
function institutional proximity performs for colleagues, at near-zero cost.
Open recruitment is therefore not merely a cheaper way to find partners but a
change in the structure of the game.
A fourth feature, the revision window the discussion phase opens, affects the
timing of $\kappa$ rather than its magnitude and we do not model it.

\subsection{Sustainability of reciprocal inflation}

Index conference rounds by $t=1,2,\dots$ and let $\delta_r \in (0,1)$ be the
product of time discounting and the probability the relationship survives to the
next round. Under grim trigger, cooperate in round $t$ if and only if no
observed betrayal has occurred, an observed betrayal being a round in which the
partner was assigned to one's own paper and reported honestly.

\begin{theorem}[Sustainability]
\label{thm:sustain}
Mutual cooperation is a subgame-perfect equilibrium of the repeated Mutual
Bidding Dilemma under grim trigger if and only if
\begin{equation}
\label{eq:deltastar}
\delta_r \;\ge\; \deltastar \;:=\; \frac{w}{g}
\;=\; \frac{\kappa + c_e}{b} \;+\; \frac{c_b}{\rho\,b}
\;=\; \frac{d\pi + c_e}{b} + \frac{c_b}{\rho b}.
\end{equation}
\end{theorem}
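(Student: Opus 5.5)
We apply the one-shot deviation principle to the grim-trigger profile. It suffices to check two classes of histories: cooperative histories, on which the prescription is to bid and, if assigned, inflate; and post-betrayal histories, on which it is to play the one-shot equilibrium of Theorem~\ref{thm:unravel}. Off path, the punishment phase repeats the stage-game equilibrium of Theorem~\ref{thm:unravel}, in which $\actH$ is strictly dominant by Lemma~\ref{lem:pd} and not bidding saves $c_b$. So no one-shot deviation is profitable there, and the continuation value after a betrayal is $0$. Recruitment and exchange costs are sunk once the dyad is formed, so they drop out of the incentive comparison. On path, each round yields $u_C = g-w$, giving continuation value $V_C = (g-w)/(1-\delta_r)$.

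The main step is to identify the binding deviation in a cooperative round. We compare two candidate deviations.
\begin{itemize}
\item[(i)] Bid, get assigned, then play $\actH$. This saves $\kappa+c_e$ immediately. Because Stage~4 is perfectly monitored ex post, the betrayal is observed, so it triggers punishment.
\item[(ii)] Skip the bid. This saves $c_b$ and forfeits nothing observable. The partner cannot distinguish it from a lost bid, since grim trigger punishes only an observed betrayal, which requires assignment.
\end{itemize}
Deviation (ii) is the delicate one, and I expect it to be the main obstacle. Taken literally, it is undetected under the stated trigger, so it would always be profitable whenever $c_b>0$. That would contradict the ``if'' direction.

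To obtain the stated threshold, I would read the deviation as a whole-round defection evaluated ex ante, which is how the aggregate $w=\rho(\kappa+c_e)+c_b$ arises. Concretely, I would treat the deviation as withholding the entire per-round contribution $w$ and losing the per-round benefit $g$ from the next round onward. Deviating then yields $g$ today, because the partner still cooperates this round, plus $0$ thereafter. This is the same bookkeeping as a Prisoner's Dilemma with cooperation cost $w$ and benefit $g$. The no-deviation condition
\[
\frac{g-w}{1-\delta_r} \;\ge\; g
\]
rearranges to $\delta_r g \ge w$, that is $\delta_r \ge w/g$. Substituting \eqref{eq:gw} gives the three displayed forms of $\deltastar$. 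For necessity, if $\delta_r < w/g$ this same deviation is strictly profitable, so cooperation is not subgame perfect.

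In writing it up I would state this ex-ante round-level accounting explicitly as the convention. One justification is that the partner's failure to be assigned is observed and the profile treats it as punishable. I would also flag that the pure bidding deviation (ii) is exactly the confound the paper defers to Proposition~\ref{prop:tolerance}. Finally, I would check that deviation (i) is implied by the same inequality. Its gain $\kappa+c_e$ is realized with probability $\rho$ and its loss is at least as large, so the condition $\delta_r \ge \deltastar$ covers it.
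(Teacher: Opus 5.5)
Your route is the paper's own: the same $V_C=(g-w)/(1-\delta_r)$, the same $V_D=g$ from withholding all of $w$ while still collecting $g$, and the same patch for the hidden bid. The paper asserts that withholding is ``caught by the assignment-based rule of Proposition~\ref{prop:tolerance} at $k=1$'' while keeping the no-false-alarm $V_C$. You are right that under the stated trigger deviation (ii) is undetectable. You should therefore present the patch as an assumption (perfect monitoring of the bid, no false alarms) rather than justify it by ``non-assignment is punishable''. If honest non-assignment were punished, the partnership would end on the equilibrium path with probability at least $1-\rho$ per round, and $V_C=(g-w)/(1-\delta_r)$ would be wrong. That is exactly the false-termination problem of Proposition~\ref{prop:tolerance}.

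The step that actually fails is your final check of deviation (i). The one-shot deviation principle must be applied at the Stage~4 information set after assignment is realized. That set is reached with probability $\rho$ on path, so even Nash optimality requires it. There $\actI$ costs $\kappa+c_e$ for certain and $\actH$ forfeits $\delta_r V_C$, so deterrence needs
\[
\kappa+c_e \;\le\; \delta_r\,\frac{g-w}{1-\delta_r}
\quad\Longleftrightarrow\quad
\delta_r \;\ge\; \frac{\kappa+c_e}{g-w+\kappa+c_e}.
\]
Evaluating ex ante does not help, because the honest review is observed only when you are assigned. The gain is $\rho(\kappa+c_e)$ and the loss is $\rho\,\delta_r V_C$, so $\rho$ cancels; your ``loss at least as large'' drops it from the loss side only.

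For $g>w$ this threshold exceeds $w/g$ exactly when $\kappa+c_e>w$, i.e.\ $(1-\rho)(\kappa+c_e)>c_b$. That is the typical case at the paper's calibration ($c_b=0.02b$, $\pi=5b$). For example, $b=1$, $\rho=0.673$, $c_e=0.40$, $d=0.05$ gives $w/g\approx 0.68$ but a Stage~4 threshold of about $0.75$. So $\delta_r\ge\deltastar$ does not deliver the ``if'' direction. Under your monitoring convention the correct condition is $\delta_r\ge\max\{w/g,\,(\kappa+c_e)/(g-w+\kappa+c_e)\}$.

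The paper's proof has the same hole. It ranks deviations by their ex-ante saving (``doing both saves $w$ in total, which is the maximum'') and treats each as punished with certainty. The $\actH$ deviation, however, is punished only in the event it is taken, and there its saving $\kappa+c_e$ exceeds $w$ in exactly the case above. Collapsing the round into a single cooperate/defect choice with cost $w$, as your Prisoner's Dilemma bookkeeping does, is what hides this interim Stage~4 constraint.
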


\begin{proof}[Proof sketch]
Cooperating forever is worth $V_C = (g-w)/(1-\delta_r)$ with $g,w$ as in
\eqref{eq:gw}. The most profitable one-shot deviation withholds the bid and
reports honestly if nonetheless assigned, saving $w$ while still collecting the
partner's already-committed $g$ this round and $0$ thereafter, so $V_D = g$.
Then $V_C \ge V_D$ rearranges to $\delta_r g \ge w$, and the one-shot deviation
principle~\cite{ref:fudenberg-tirole} gives subgame perfection.
Appendix~\ref{app:proofs}.
\end{proof}

\begin{corollary}[Comparative statics]
\label{cor:comparative}
$\deltastar$ is increasing in $d$, $\pi$, $c_e$, and $c_b$, and decreasing in
$b$ and $\rho$.
\end{corollary}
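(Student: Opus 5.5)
The plan is to differentiate the closed form \eqref{eq:deltastar} of Theorem~\ref{thm:sustain} term by term, treating each primitive as a free parameter with the others held fixed. Write
\[
\deltastar(d,\pi,c_e,c_b,b,\rho) \;=\; \frac{d\pi + c_e}{b} + \frac{c_b}{\rho b}.
\]
The standing assumptions are those under which the threshold is meaningful: $b>0$, $\rho\in(0,1]$, $d,\pi,c_e,c_b\ge 0$. Throughout, ``increasing'' means strictly increasing wherever the relevant partial derivative is nonzero.

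The partials in the cost parameters follow directly:
\[
\partial_d \deltastar = \pi/b, \quad \partial_\pi \deltastar = d/b, \quad \partial_{c_e}\deltastar = 1/b, \quad \partial_{c_b}\deltastar = 1/(\rho b).
\]
Each is nonnegative, and each is strictly positive when the complementary factor is, namely $\pi>0$ for $d$ and $d>0$ for $\pi$. The effort cost $c_e$ and the bid cost $c_b$ enter strictly for any admissible $b,\rho$. That is the comparative static the paper relies on for the language-model discussion.

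For the benefit parameters,
\[
\partial_b \deltastar = -\frac{d\pi + c_e}{b^2} - \frac{c_b}{\rho b^2} = -\frac{\deltastar}{b} < 0,
\]
which is strict whenever $\deltastar>0$; this holds in particular under the hypothesis $c_b>0$ carried over from Theorem~\ref{thm:unravel}. Likewise $\partial_\rho \deltastar = -c_b/(\rho^2 b) < 0$ when $c_b>0$. Equivalently, one can argue from $\deltastar = w/g$ in \eqref{eq:gw}: raising $b$ raises $g$ and leaves $w$ unchanged. Raising $\rho$ scales both $g$ and the reviewing part of $w$, so it cancels there and survives only through the fixed bid cost $c_b$.

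The only point requiring care is the dependence through $\rho$. If $c_b=0$, then $\deltastar$ is constant in $\rho$, so the ``decreasing in $\rho$'' claim is weak monotonicity only in that edge case. I would state strictness under $c_b>0$, and likewise for $d$ and $\pi$ under $\pi>0$ and $d>0$ respectively. Nothing beyond this bookkeeping is needed, since Theorem~\ref{thm:sustain} supplies the formula.
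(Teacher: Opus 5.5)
Your proposal is correct and follows the paper's proof exactly: differentiate $\deltastar = (d\pi + c_e)/b + c_b/(\rho b)$ term by term and read off the signs of $\pi/b$, $d/b$, $1/b$, $1/(\rho b)$, $-c_b/(\rho^2 b)$ and $-\deltastar/b$. Your caveats are a small sharpening the paper leaves implicit: strictness in $\pi$, $d$, and $\rho$ needs $d>0$, $\pi>0$, and $c_b>0$ respectively, and strictness in $b$ needs $\deltastar>0$.
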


Two of these carry the paper's practical content. Monotonicity in $c_e$ matters
because writing a convincing inflated review is the only expensive step in the
sequence and the one that has recently become cheap: $c_e \to 0$ gives
$\deltastar \to d\pi/b + c_b/(\rho b)$, so a dyad that could not sustain
collusion at human effort levels may sustain it with machine assistance and no
other change. Monotonicity in $\rho$ matters for defense design: any
intervention lowering the probability that a collusive bid produces an
assignment, as randomized assignment~\cite{ref:jecmen2020} does directly and
cycle-free constraints~\cite{ref:cyclefree} do for particular ring topologies,
raises $\deltastar$ and shrinks the set of dyads that can sustain collusion.
Proposition~\ref{prop:tolerance} shows that it does so twice over.

\subsection{Hidden bidding and the tolerance problem}

The grim trigger of Theorem~\ref{thm:sustain} punishes only observed betrayal,
and Stage~3 produces no observation: a party who quietly stops bidding is never
assigned, appears merely unlucky, and is never punished. Enforcement at Stage~3
therefore requires punishing on non-assignment, a confounded signal that occurs
with probability $1-\rho$ in every round even under full cooperation. This is
the structure Green and Porter identified in cartels observing only a noisy
public signal of each other's conduct~\cite{ref:fudenberg-tirole}, and the
resolution is the same: punishment must be triggered by a statistic that
sometimes fires on the equilibrium path, and the optimal rule trades deterrence
against those false triggers. Let the colluders adopt a
$k$-strike rule, terminating the relationship after $k$ consecutive rounds in
which the partner is not assigned to one's paper.

\begin{proposition}[Tolerance]
\label{prop:tolerance}
Under the $k$-strike rule, withholding the bid is deterred if and only if
$\delta_r^{\,k} \ge \deltastar$, that is
\begin{equation}
\label{eq:kmax}
k \;\le\; k_{\max} \;=\; \frac{\ln \deltastar}{\ln \delta_r},
\end{equation}
while holding the per-round probability of terminating an honestly cooperating
relationship below $\varepsilon$ requires
\begin{equation}
\label{eq:kmin}
k \;\ge\; k_{\min} \;=\; \frac{\ln \varepsilon}{\ln(1-\rho)} .
\end{equation}
A workable rule exists only if $k_{\min} \le k_{\max}$. As $\rho \to 0$,
$k_{\min} \to \infty$ and $k_{\max}$ falls, so the feasible set empties.
\end{proposition}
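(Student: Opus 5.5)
The plan mirrors the one-shot deviation argument of Theorem~\ref{thm:sustain}, adapted to a punishment that arrives only after a delay of $k$ rounds. The incentive and false-trigger conditions are handled separately and then combined.

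\textbf{Deterrence condition.} Fix a cooperative path under the $k$-strike rule and consider a party who stops bidding, which is the only deviation Stage~3 conceals. Stage~4 deviations are already handled by the observed-betrayal trigger of Theorem~\ref{thm:sustain}. Following the sketch there, the deviator saves $w$ in each round in which they withhold, and continues to collect the partner's $g$. Since the deviator is never assigned to the partner's paper, the strike counter fires deterministically after $k$ rounds.

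With $\rho_0$ suppressed, as in the proof of Theorem~\ref{thm:sustain}, I will reduce the comparison to the following trade-off. Deviating removes the stream $g-w$ from round $k$ onward, discounted by $\delta_r^{\,k}$. In exchange, the deviator gains the one-shot saving that Theorem~\ref{thm:sustain} normalizes by $g$. Concretely, I expect $V_C \ge V_D$ to rearrange to $\delta_r^{\,k} g \ge w$, i.e.\ $\delta_r^{\,k}\ge \deltastar$. This is the $k=1$ case of Theorem~\ref{thm:sustain} with $\delta_r$ replaced by the effective discount $\delta_r^{\,k}$ of a punishment delayed $k$ rounds.

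This is the step I expect to be the main obstacle. A naive accounting of savings $w$ over all $k$ undetected rounds yields a geometric sum rather than the clean condition $\delta_r^{\,k}\ge\deltastar$. I will therefore fix the deviation as a one-shot withholding in the decisive round, judged against the continuation value discounted to the trigger time, so that the one-shot deviation principle applies exactly as in Theorem~\ref{thm:sustain}. I will state this modeling convention explicitly.

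With the condition in hand, taking logarithms uses $\ln\delta_r<0$. Dividing by $\ln\delta_r$ flips the inequality, which gives $k\le \ln\deltastar/\ln\delta_r = k_{\max}$.

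\textbf{False-trigger condition.} Under full cooperation, assignment in each round is an independent event of probability $\rho$. The probability that a given window of $k$ consecutive rounds contains no assignment is therefore $(1-\rho)^k$. I will take this as the per-round termination probability, in the natural upper-bound sense. Requiring $(1-\rho)^k\le\varepsilon$ and dividing by $\ln(1-\rho)<0$ gives $k\ge \ln\varepsilon/\ln(1-\rho)=k_{\min}$.

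\textbf{Feasibility and the limit.} Feasibility is the conjunction of the two bounds, so a workable rule exists only if $k_{\min}\le k_{\max}$. For the limit $\rho\to0$, I will use $\ln(1-\rho)\to 0^-$ with $\ln\varepsilon<0$ fixed, which sends $k_{\min}\to\infty$. By Corollary~\ref{cor:comparative}, $\deltastar$ is decreasing in $\rho$, so $\deltastar$ rises as $\rho$ falls. Since $\ln\delta_r<0$, this makes $k_{\max}=\ln\deltastar/\ln\delta_r$ fall. Moreover, once $c_b/(\rho b)$ pushes $\deltastar$ past $1$, $k_{\max}$ becomes negative. The feasible interval is thus eventually empty.
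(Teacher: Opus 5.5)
Your false-trigger bound and the $\rho\to 0$ argument match the paper. Your remark that $k_{\max}<0$ once $\deltastar>1$ is, if anything, sharper than the paper's observation that $\deltastar\to\infty$. The gap is in the deterrence step. There you assert the condition rather than derive it, and the obstacle you anticipate does not exist. The ``naive'' accounting over all $k$ undetected rounds is exactly the paper's proof, and it yields the clean condition. A party who stops bidding collects $g$ instead of $g-w$ in each of the $k$ rounds before the rule fires, and $0$ afterwards, so
\[
V_{D3}=g\sum_{t=0}^{k-1}\delta_r^{t}=g\,\frac{1-\delta_r^{k}}{1-\delta_r},
\qquad
V_C-V_{D3}=\frac{(g-w)-g(1-\delta_r^{k})}{1-\delta_r}=\frac{\delta_r^{k}g-w}{1-\delta_r}.
\]
Equivalently, the deviator's gain $w(1-\delta_r^{k})/(1-\delta_r)$ from $k$ rounds of saved cost is weighed against the forfeited stream $\delta_r^{k}(g-w)/(1-\delta_r)$, and the $\delta_r^{k}w$ terms cancel. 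So $V_C\ge V_{D3}$ holds iff $\delta_r^{k}\ge\deltastar$, with no extra convention. Your heuristic of ``replace $\delta_r$ by $\delta_r^{k}$'' is justified precisely by this computation, which amounts to grouping rounds into blocks of $k$.

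The workaround you propose instead would not deliver the result. Under a $k$-strike rule with $k\ge 2$, a single withheld bid adds one strike to the partner's counter. The relationship ends only if that strike happens to complete a run of $k$ consecutive non-assignments together with unlucky rounds around it; otherwise the next assignment resets the counter. A ``one-shot withholding in the decisive round'' is therefore not punished by $\delta_r^{k}$ times the cooperative continuation. Nor can the one-shot deviation principle be applied ``exactly as in Theorem~\ref{thm:sustain}''. The $k$-strike strategy carries a state (the strike counter) that grim trigger lacks, so one-shot deviations would have to be checked at every counter value, which gives a different and messier condition. Your description of the gain as ``the one-shot saving that Theorem~\ref{thm:sustain} normalizes by $g$'' is also off: the gain is $k$ rounds of saved $w$. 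Drop the convention and compare sustained withholding until the trigger fires against perpetual cooperation, as the paper does.
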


\begin{proof}[Proof sketch]
A party who stops bidding collects $g$ rather than $g-w$ for exactly $k$ rounds
before the rule fires, so $V_{D3} = g(1-\delta_r^k)/(1-\delta_r)$ and
$V_C \ge V_{D3}$ yields $\delta_r^k g \ge w$. Equation~\eqref{eq:kmin} requires
$k$ consecutive independent non-assignments, each of probability $1-\rho$, to
occur with probability at most $\varepsilon$. Appendix~\ref{app:proofs}.
\end{proof}

Proposition~\ref{prop:tolerance} identifies an effect of randomized assignment
that has not, to our knowledge, been noted. Randomization is normally justified
as a way to reduce the expected return on manipulation and criticized for the
assignment quality it costs~\cite{ref:jecmen2020,ref:jecmen2022}. It also
degrades the colluders' own monitoring technology, since a lower $\rho$ forces a
more forgiving rule to avoid destroying honest partnerships, and a more
forgiving rule cannot deter free-riding, at no cost beyond the first effect.

\subsection{Who reveals first, and when it is worth posting}

Let $W = V_C = (g-w)/(1-\delta_r)$ be the continuation value of a genuine
partnership, available from Theorem~\ref{thm:sustain} whenever
$\delta_r \ge \deltastar$.

\begin{proposition}[Trust threshold]
\label{prop:trust}
At Stage~2, the responder reciprocates rather than reports if and only if
$W \ge R$. Anticipating this, the first mover reveals a real submission
identifier if and only if
\begin{equation}
\label{eq:qstar}
q \;\ge\; \qstar \;=\; \frac{L}{W + L}.
\end{equation}
$\qstar$ is decreasing in $W$, and $\qstar \to 0$ as $\delta_r \to 1$.
\end{proposition}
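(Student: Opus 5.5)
The argument is backward induction within Stage~2, treating the continuation after a successful exchange as the repeated game of Theorem~\ref{thm:sustain}. Its value to each party is $W = V_C = (g-w)/(1-\delta_r)$ whenever $\delta_r \ge \deltastar$.

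\textbf{Step 1: the genuine responder's choice.} A genuine responder who has just received a real identifier has three options. Choosing \texttt{reciprocate} opens the partnership, worth $W$. Choosing \texttt{report} yields $R$ and ends the relationship, since the first mover is sanctioned and cannot reappear as a partner. Choosing \texttt{fake} yields $0$: with no real identifier the first mover cannot bid, so the Stage~3 bid and the whole partnership are lost.

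Since $W \ge 0$ under $\delta_r \ge \deltastar$, \texttt{fake} is weakly dominated by \texttt{reciprocate}. The comparison therefore reduces to $W$ against $R$, which gives the first claim. I would adopt the convention that ties go to reciprocation, which is what makes the condition ``iff'' with a weak inequality.

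\textbf{Step 2: the first mover's choice.} Suppose $W \ge R$. Nature has drawn a genuine type with probability $q$, who then reciprocates and delivers $W$ to the first mover. With probability $1-q$ the responder is a sting by construction and reports, costing the first mover $L$. Revealing is thus worth $qW - (1-q)L$, and withdrawing is worth $0$.

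Revealing is optimal iff $qW \ge (1-q)L$, that is $q(W+L) \ge L$, which is $q \ge L/(W+L) = \qstar$. Dividing by $W+L$ is legitimate because $L>0$ and $W \ge 0$.

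I would note that if $W<R$, every type reports, so revealing yields $-L<0$ and the threshold is vacuous. The ``iff'' should be read under the maintained hypothesis $W \ge R$. One small subtlety to state explicitly is that the stage-1 signaling cost is sunk at this point and does not enter.

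\textbf{Step 3: comparative statics.} The derivative is $\partial \qstar/\partial W = -L/(W+L)^2 < 0$, so $\qstar$ is decreasing in $W$. For the limit, $g-w>0$ holds whenever $\delta_r \ge \deltastar$ with $\deltastar<1$. Hence $W = (g-w)/(1-\delta_r) \to \infty$ as $\delta_r \to 1$, and $\qstar \to 0$.

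\textbf{Main obstacle.} The delicate point is Step~1: justifying that the responder's post-exchange continuation is exactly $W$. This requires that the partnership, once begun, plays the cooperative equilibrium of Theorem~\ref{thm:sustain} and not some other continuation. It also requires that reporting forfeits all future partnership value rather than, say, allowing the responder to report and still collude elsewhere.

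I would handle this by invoking equilibrium selection: conditional on exchange, the pair coordinates on grim-trigger cooperation, which Theorem~\ref{thm:sustain} shows is subgame perfect. I would also treat $R$ as the responder's full payoff from the reporting branch, including any outside options.
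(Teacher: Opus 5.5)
Your proposal is correct and follows the paper's proof step for step. The responder weighs $W$ against $R$, with \texttt{fake} worth $0$; the first mover compares $qW-(1-q)L$ against $0$, with signaling costs sunk; and the comparative statics follow from $\partial \qstar/\partial W = -L/(W+L)^2$ together with $W\to\infty$ when $g>w$. Your added caveats only make explicit what the paper's proof assumes implicitly: reading the ``iff'' under the maintained hypothesis $W \ge R$, and selecting the grim-trigger continuation after the exchange.
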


\begin{proof}[Proof sketch]
Revealing yields $qW - (1-q)L$ against $0$ from withdrawing, which rearranges to
\eqref{eq:qstar}; the responder compares the continuation value $W$ against the
one-time payoff $R$. Appendix~\ref{app:proofs}.
\end{proof}

Equation~\eqref{eq:qstar} accounts for the behavior hardest to reconcile with
intuition, namely people posting evidence of intent under identifiable handles
and then sending submission identifiers to strangers. They are not misjudging
the probability of a sting but responding correctly to a continuation value
large enough that $\qstar$ is low. The lever a venue controls here is $L$, which
saturates because it appears in both numerator and denominator; the effective
lever is $W$.

\begin{proposition}[Platform size]
\label{prop:platform}
An initiator whose post reaches $n$ readers, each responding independently with
probability $\lambda$, posts if and only if
$n \ge n^{\star} = \ln(1 - (c_s + \mu\pi_s)/\Gamma)/\ln(1-\lambda)$ with
$\Gamma = qW-(1-q)L > c_s + \mu \pi_s$. $n^{\star}$ is increasing in the venue's
monitoring intensity $\mu$ and in the sanction $\pi_s$ attached to an attributed
post.
\end{proposition}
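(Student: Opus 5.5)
The plan is to compute the initiator's expected payoff from posting at Stage~1 and compare it with the payoff $0$ from not posting. The continuation after a successful match is taken from the Stage~2 analysis.

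By Proposition~\ref{prop:trust}, once a responder appears and the first mover reveals, the expected continuation is $\Gamma = qW-(1-q)L$. I assume revealing is optimal, i.e.\ $q\ge\qstar$, which is exactly $\Gamma\ge 0$. The hypothesis $\Gamma > c_s+\mu\pi_s$ is stronger, since $c_s+\mu\pi_s\ge 0$.

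Next I model response arrival. Each of the $n$ readers responds independently with probability $\lambda$, so at least one responder appears with probability $1-(1-\lambda)^n$. If several respond, the initiator engages one; the model tracks a single dyad, so extra responders add nothing in this benchmark. If nobody responds, the continuation is $0$. The cost $c_s$ and the expected sanction $\mu\pi_s$ are incurred whenever the post is made, whether or not anyone answers. So posting yields
\[
\bigl(1-(1-\lambda)^n\bigr)\Gamma - c_s - \mu\pi_s ,
\]
and the initiator posts iff this is $\ge 0$.

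Rearranging gives $(1-\lambda)^n \le 1-(c_s+\mu\pi_s)/\Gamma$. The condition $\Gamma>c_s+\mu\pi_s$ puts the right side in $(0,1]$, so its logarithm is defined and nonpositive. Taking logs gives $n\ln(1-\lambda)\le \ln(1-(c_s+\mu\pi_s)/\Gamma)$. The one step needing care is the sign flip: for $\lambda\in(0,1)$ we have $\ln(1-\lambda)<0$, so dividing reverses the inequality and yields $n\ge n^\star$. The degenerate cases $\lambda\in\{0,1\}$ I will treat separately or exclude.

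For the comparative statics, let $x=(c_s+\mu\pi_s)/\Gamma\in[0,1)$, so $n^\star=\ln(1-x)/\ln(1-\lambda)$. The map $x\mapsto\ln(1-x)$ is decreasing and negative on $(0,1)$, and dividing by the negative constant $\ln(1-\lambda)$ makes $n^\star$ increasing in $x$. Since $\Gamma$ does not depend on $\mu$ or $\pi_s$, $x$ is strictly increasing in $\mu$ when $\pi_s>0$ and in $\pi_s$ when $\mu>0$. This gives the claim. I expect no real obstacle; the only care is that the range condition keeps $\ln(1-x)$ finite, and the sign of $\ln(1-\lambda)$.
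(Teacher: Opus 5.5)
Your proposal is correct and follows the paper's proof essentially step for step. Like the paper, you write the posting payoff as $[1-(1-\lambda)^n]\Gamma - c_s - \mu\pi_s \ge 0$, rearrange to $(1-\lambda)^n \le 1-(c_s+\mu\pi_s)/\Gamma$, take logarithms using $\ln(1-\lambda)<0$, and read the monotonicity in $\mu$ and $\pi_s$ off $x=(c_s+\mu\pi_s)/\Gamma$. Your extra remarks are small refinements the paper leaves implicit: the hypothesis $\Gamma > c_s+\mu\pi_s$ already gives $q \ge q^{\star}$, so revealing is optimal, and strictness needs $\pi_s>0$ (respectively $\mu>0$).
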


A public recruitment group satisfies $n \ge n^\star$ comfortably at any
plausible $\lambda$, which is why the posts are public rather than private, and
raising $\mu$ is the only intervention in the model acting before any bid is
placed.

\subsection{Ring size}

A reviewer may maintain $m$ partners in one round. Detection risk grows with
$m$, since a larger collusive subgraph is a stronger signal to any of the
detectors in \S\ref{sec:taxonomy}; we write $d(m) = d_0 + \beta m$ for the
per-review detection probability.

\begin{proposition}[Optimal ring size]
\label{prop:ringsize}
The per-round payoff $\Pi(m) = m\left[\rho(b-c_e)-c_b-\rho\pi d_0\right]
- \rho\pi\beta m^2$ is maximized at
\begin{equation}
\label{eq:mstar}
\mstar \;=\; \frac{\rho(b - c_e) - c_b - \rho\pi d_0}{2\rho\pi\beta},
\qquad
\frac{\partial \mstar}{\partial c_e} = -\frac{1}{2\pi\beta} < 0 .
\end{equation}
\end{proposition}

The derivative in \eqref{eq:mstar} is the formal version of the claim that
language models enlarge collusion rings. Effort cost per partner is what bounded
ring size when reviews were written by hand, and removing it moves the optimum
outward at a rate set only by the sanction and by $\beta$. Since $\beta$ is in
the denominator, a detector sharply more sensitive to larger rings limits ring
size even when it cannot catch dyads.

\subsection{The equilibrium-breaking threshold}

\begin{theorem}[Detection threshold]
\label{thm:dstar}
For a fixed discount factor $\delta_r$, mutual cooperation is sustainable if and
only if the per-review detection probability satisfies
\begin{equation}
\label{eq:dstar}
d \;\le\; \dstar(\delta_r) \;=\; \frac{1}{\pi}
\left( \delta_r b - c_e - \frac{c_b}{\rho} \right).
\end{equation}
In particular, no discount factor sustains collusion once
$d > (b - c_e - c_b/\rho)/\pi$.
\end{theorem}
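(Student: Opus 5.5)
The plan is to obtain Theorem~\ref{thm:dstar} by inverting the sustainability condition of Theorem~\ref{thm:sustain} in the variable $d$ rather than $\delta_r$. By Theorem~\ref{thm:sustain}, mutual cooperation under grim trigger is a subgame-perfect equilibrium if and only if $\delta_r \ge \deltastar = w/g$. Since $g=\rho b>0$, this is equivalent to $\delta_r g \ge w$, that is, $\delta_r \rho b \ge \rho(d\pi + c_e) + c_b$, after substituting $\kappa = d\pi$ from Table~\ref{tab:notation} into \eqref{eq:gw}. This is the form already reached in the proof sketch of Theorem~\ref{thm:sustain}, so no new equilibrium argument is needed. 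Every step is an equivalence, so the ``if and only if'' of Theorem~\ref{thm:sustain} carries over directly.

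Next we isolate $d$. Dividing by $\rho>0$ gives $\delta_r b \ge d\pi + c_e + c_b/\rho$, and dividing by $\pi>0$ gives $d \le \frac{1}{\pi}\left(\delta_r b - c_e - c_b/\rho\right) = \dstar(\delta_r)$, which is \eqref{eq:dstar}. The only care required is with signs. We must assume $\rho>0$, $b>0$ and $\pi>0$ so that the divisions preserve the direction of the inequality; these are implicit in the model, and we will state them explicitly. If $\dstar(\delta_r)<0$, the condition simply fails for every $d\in[0,1]$, which is consistent with the statement.

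For the ``in particular'' clause we use monotonicity in $\delta_r$. The threshold $\dstar(\delta_r)$ is strictly increasing and affine in $\delta_r$, because $b/\pi>0$. Since $\delta_r\in(0,1)$, we get $\dstar(\delta_r) < \dstar(1) = (b - c_e - c_b/\rho)/\pi$ for every admissible $\delta_r$. So if $d > (b-c_e-c_b/\rho)/\pi$, then $d > \dstar(\delta_r)$ for all $\delta_r$, and by the first part no discount factor sustains cooperation. The inequality is in fact strict even at the boundary $d=\dstar(1)$, because $\delta_r<1$ is strict. The equivalent reading is that $\deltastar \ge 1$ whenever $d \ge \dstar(1)$.

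I expect no step to be a genuine obstacle; the content is entirely inherited from Theorem~\ref{thm:sustain}. The one point worth stating is scope. ``Sustainable'' here means sustainable under the grim-trigger profile of Theorem~\ref{thm:sustain}, with Stage~3 withholding handled there by the assumption that the partner's $g$ is collected in the deviation round. Under the $k$-strike rule of Proposition~\ref{prop:tolerance}, the analogous threshold replaces $\delta_r$ by $\delta_r^{k}$, giving $\frac{1}{\pi}(\delta_r^k b - c_e - c_b/\rho)$. That threshold is lower, so the limiting bound $\dstar(1)$ remains a valid universal cutoff.
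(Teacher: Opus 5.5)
Your proposal is correct and follows the paper's own proof. The paper likewise inverts the condition $\delta_r \ge \deltastar$ of Theorem~\ref{thm:sustain} in $d$ (it multiplies by $b$ and rearranges), then uses that $\dstar(\delta_r)$ is increasing in $\delta_r$ with $\delta_r<1$ to get the universal cutoff $\dstar(1)$; your explicit positivity assumptions on $\rho$, $b$, $\pi$ and your boundary remark are sound additions that the paper leaves implicit.
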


\begin{proof}[Proof sketch]
Substitute $\kappa = d\pi$ into \eqref{eq:deltastar} and solve
$\deltastar \le \delta_r$ for $d$. The unconditional statement sets
$\delta_r = 1$. Appendix~\ref{app:proofs}.
\end{proof}

Fig.~\ref{fig:phase} plots \eqref{eq:dstar} in the $(d,b)$ plane. The boundary
is linear in $b$ with slope $\delta_r/\pi$, so the benefit of publication and
the probability of detection trade off at a rate set entirely by the sanction
and the durability of the relationship. The reported failure of existing
detectors on camouflaged bidding~\cite{ref:jecmen2024}, which
\S\ref{sec:experiments} reproduces, places current venues far inside the stable
region; the annotated estimate of that position depends on parameters no venue
publishes and is an order-of-magnitude claim rather than a measurement.
\S\ref{sec:discussion} reads the available interventions off
\eqref{eq:dstar}.

\begin{figure}[htbp]
  \centering
  \includegraphics[width=\columnwidth]{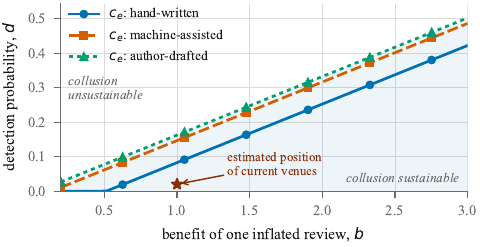}
  \caption{Stability regions in the $(d,b)$ plane from \eqref{eq:dstar}, at
  $c_e$ corresponding to hand-written and to machine-assisted reviews. Below
  each boundary, reciprocal inflation is sustainable for dyads with
  $\delta_r \ge \deltastar$. The reduction in $c_e$ moves the boundary upward, enlarging the stable region without any change to
  detection capability. Parameters as in Table~\ref{tab:params}.}
  \label{fig:phase}
\end{figure}

%% file: sections/05_taxonomy.tex
\section{Attack Strategy Taxonomy}
\label{sec:taxonomy}

The equilibrium analysis treats $\rho$ and $d$ as parameters, but a colluder
chooses them, within limits, by choosing how to bid and how to write. We catalog
those choices because defenses cannot be evaluated against an unspecified
adversary, and because the existing detector
evaluation~\cite{ref:jecmen2024} considers a narrower strategy space than
recruitment posts describe.

\subsection{Bidding strategies}

\textbf{Naive.} Bid \emph{eager} on every partner's paper and behave normally
otherwise. This maximizes $\rho$ per partner and requires no planning, but
produces the signature every published detector is built to find: positive bids
concentrated on a small set of papers with no topical justification. It is the
lower bound on adversary sophistication.

\textbf{Camouflaged.} Place the collusive bids inside a larger set of genuine
bids on topically adjacent papers, sized so the collusive fraction of the
reviewer's bid vector matches the population distribution. The collusive bids
become individually unremarkable and the aggregate profile statistically
ordinary. This is the strategy \cite{ref:jecmen2024} found sufficient to defeat
most off-the-shelf detectors, and it costs only the time spent selecting cover
papers.

\textbf{Distributed ring.} Replace the dyad $i \leftrightarrow j$ with a
directed cycle $i \to j \to \ell \to i$ of length $k \ge 3$, in which each member
bids on exactly one other member's paper. The assignment graph then contains no
two-cycle, so cycle-free constraints at $k=2$ do not bind~\cite{ref:cyclefree},
and each individual's collusive bid count is one, below the threshold of any
frequency-based test. The cost is coordination, and
Proposition~\ref{prop:tolerance} now applies to a longer monitoring chain in
which $i$ observes only $\ell$'s review and must infer $j$'s conduct
indirectly.

\textbf{Affinity-aware.} Combine a collusive bid with manipulation of the
text-matching input, by curating the reviewer's publication record toward the
partner's topic or inserting matched background text into the
submission~\cite{ref:hsieh2025}. Because the optimizer weighs affinity and bids
together, raising $A_{ij}$ raises $\rho$ for a fixed bid and, more importantly,
makes the assignment defensible: a reviewer assigned to a paper they bid on
\emph{and} match strongly is not anomalous under any signal we know of. By
Theorem~\ref{thm:sustain} it also lowers $\deltastar$, making collusion
sustainable for dyads that could not otherwise sustain it.

\begin{figure*}[htbp]
  \centering
  \begin{subfigure}[t]{0.32\textwidth}
    \includegraphics[width=\textwidth]{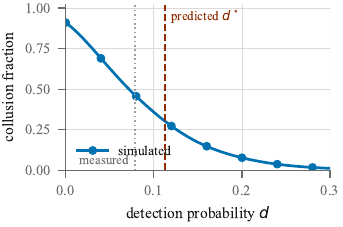}
    \caption{Detection probability $d$.}
    \label{fig:eq-d}
  \end{subfigure}\hfill
  \begin{subfigure}[t]{0.32\textwidth}
    \includegraphics[width=\textwidth]{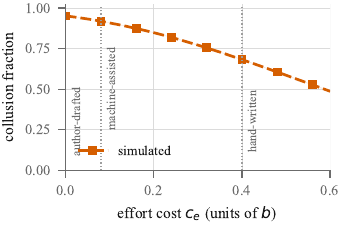}
    \caption{Effort cost $c_e$.}
    \label{fig:eq-ce}
  \end{subfigure}\hfill
  \begin{subfigure}[t]{0.32\textwidth}
    \includegraphics[width=\textwidth]{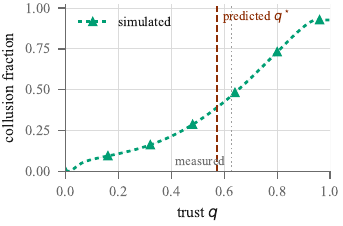}
    \caption{Trust parameter $q$.}
    \label{fig:eq-q}
  \end{subfigure}
  \caption{Equilibrium validation over $\nAgents$ heterogeneous agents. In
  (\subref{fig:eq-d}) and (\subref{fig:eq-q}) the dashed line is the closed-form
  threshold and the dotted line the measured transition; each measured
  transition sits on the congestion-adjusted side of its closed form. Panel
  (\subref{fig:eq-ce}) marks the three effort regimes of \S\ref{sec:taxonomy}.
  Shaded bands are 95\% intervals over 10 seeds.}
  \label{fig:equilibrium}
\end{figure*}

\subsection{Reviewing strategies}

The reviewing stage admits its own gradient, ordered by how much of the
meta-reviewer's attention the inflated review attracts. \emph{Uniform maximum}
awards the highest available score with brief justification, effective only if
the other reviews are also positive. \emph{Calibrated inflation} awards a score
high relative to the paper but ordinary relative to the reviewer's own history,
bounding $\Delta s$ by the reviewer's observed variance rather than by the score
scale, so a per-reviewer distribution test finds nothing. \emph{Constructive
positive} raises several minor concerns, addresses them approvingly, and
recommends acceptance, reading as engaged and surviving discussion at
substantially higher $c_e$. \emph{Author-drafted} has the submitting authors
write the review and the colluding reviewer submit it after superficial editing,
giving text that is specific, technically correct, and human written. It attains
the lowest $c_e$ here together with the highest resistance to content-based
scrutiny, and machine-text detectors do not apply because there is no machine
text. We flag it as the most concerning entry in the taxonomy.

\subsection{Detector families and evasion}

We consider four detector families spanning the published approaches.
\emph{Bid frequency} tests flag reviewers whose bid vectors are unusual in
concentration or in the affinity of the papers they bid on. \emph{Dense subgraph
mining} searches the reviewer-paper bid graph for subgraphs whose internal
density exceeds what the topic structure explains, following the peeling
approach of~\cite{ref:charikar} and its camouflage-resistant
refinements~\cite{ref:fraudar}. \emph{Cycle detection} searches the assignment or
bid graph for short directed cycles among authors who are also reviewers.
\emph{Text-similarity anomaly} tests flag assignments in which realized affinity
is implausible given the reviewer's publication history, and reviews whose
content diverges from the reviewer's usual output.

Each family is defeated by a specific and cheap countermeasure, and no family
covers the countermeasures to the others. Camouflage defeats frequency and
density tests by making the collusive bids a small fraction of an ordinary
profile, distribution defeats cycle detection at the length venues constrain,
and affinity awareness defeats similarity tests by making the match genuine as
measured. Under our simulation, a colluder combining all three is not
distinguishable from an honest reviewer by any of the four signals taken alone.
That is a claim about single-signal detectors, not a claim that such colluders
are undetectable, and \S\ref{sec:discussion} argues the joint distribution of
the signals remains informative. Table~\ref{tab:evasion} in
Appendix~\ref{app:detection} gives the predicted outcome of each pairing, which
\S\ref{sec:experiments} measures.

%% file: sections/06_experiments.tex
\section{Experimental Evaluation}
\label{sec:experiments}

\subsection{Simulation framework}

No conference releases bidding data, assignment logs, or affinity scores, so the
quantities this paper needs cannot be measured on any existing dataset. We build
an end-to-end simulator and calibrate it against what is public: submission and
reviewer counts, review load, score distributions, and acceptance rates from
OpenReview~\cite{ref:openreview}, and bidding patterns from the simulated
malicious bidding dataset of~\cite{ref:jecmen2023}, the same evidentiary posture
as prior work in this area.

The simulator runs the full pipeline: it generates submissions with a latent
quality $\theta_p$ and a topic vector and reviewers with topic vectors and
authorship, computes affinity from topic overlap with noise, produces honest
bids as a function of affinity, injects collusive bids according to a chosen
strategy from \S\ref{sec:taxonomy}, solves \eqref{eq:assignment}, scores papers
as noisy observations of $\theta_p$ with collusive scores inflated, and applies
an acceptance threshold calibrated to the target rate. Because $\theta_p$ is
generated rather than inferred, the simulator supplies the ground-truth quality
no real dataset contains, which is what makes the measurements of
\S\ref{sec:impact} possible at all. Venue parameters follow published ICLR and
NeurIPS statistics at $N = 6000$, $M = 9000$, $k=3$ reviews per paper and load
$L_i = 5$; game parameters are swept over the ranges in
Table~\ref{tab:params}, which also records the assumptions we could not
calibrate. The framework and experiment configurations are available on
request.

\subsection{Experiment 1: equilibrium validation}
\label{sec:e1}

\textbf{Setup.} We populate the game with $\nAgents$ agents, heterogeneous in
$b$ and $\delta_r$, who decide whether to collude by best response to the
current population state, and iterate to a fixed point. Two couplings make this
a population game rather than a threshold applied pointwise: more colluders
compete for the same assignment slots, lowering $\rho$, and a larger collusive
subgraph is easier to detect, raising $d$. We sweep $d$, $c_e$, and $q$
independently and record the collusion fraction at the fixed point.

\begin{figure*}[htbp]
  \centering
  \begin{subfigure}[t]{0.32\textwidth}
    \includegraphics[width=\textwidth]{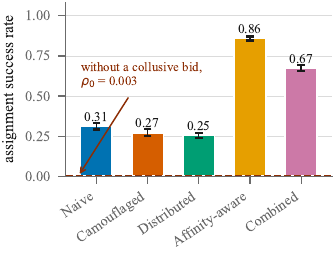}
    \caption{Assignment success by strategy.}
    \label{fig:success}
  \end{subfigure}\hfill
  \begin{subfigure}[t]{0.32\textwidth}
    \includegraphics[width=\textwidth]{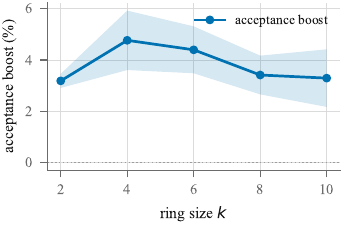}
    \caption{Acceptance gain against ring size.}
    \label{fig:impact-ring}
  \end{subfigure}\hfill
  \begin{subfigure}[t]{0.32\textwidth}
    \includegraphics[width=\textwidth]{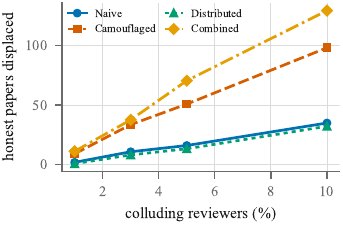}
    \caption{Displacement against colluder fraction.}
    \label{fig:impact-disp}
  \end{subfigure}
  \caption{Attack effectiveness and downstream impact, each measured against the
  zero-collusion run on the same instance. In (\subref{fig:success}) the dashed
  line is $\rho_0$, the probability of the same assignment without a collusive
  bid, over $\evasionSeeds$ instances. Panels (\subref{fig:impact-ring}) and
  (\subref{fig:impact-disp}) use $\impactSeeds$ instances; the peak at
  $k = \bestRingSize$ is the interior optimum of
  Proposition~\ref{prop:ringsize}.}
  \label{fig:impact}
\end{figure*}

\textbf{Results.} The transition in $d$ (Fig.~\ref{fig:eq-d}) occurs at
$d = \dTransition$ against the closed-form $\dstar = \dStar$ from
\eqref{eq:dstar}. It sits below the predicted threshold, and the gap is the
congestion term: at a positive collusion fraction the effective detection
probability exceeds its nominal value, so the population crosses the threshold
before the nominal parameter reaches it. The same mechanism explains
Fig.~\ref{fig:eq-q}, where the measured transition $q = \qTransition$ sits above
the closed-form $\qstar = \qStar$, since congestion lowers $W$ and
\eqref{eq:qstar} is decreasing in $W$. The predicted $\qstar$ is well below one,
the formal counterpart of people sending submission identifiers to strangers who
may be hostile.

The $c_e$ sweep of Fig.~\ref{fig:eq-ce} is Corollary~\ref{cor:comparative}
measured. At hand-written effort ($c_e = \effortCostHand\,b$),
$\collusionHand$\% of the population can sustain collusion; at machine-assisted
effort ($c_e = \effortCostMachine\,b$) it is $\collusionMachine$\%; and at the
author-drafted level $c_e = 0$ it is $\collusionAuthor$\%. Nothing about
detection, sanctions, or the value of publication changed across those three
points. Over the same range the optimal ring size of
Proposition~\ref{prop:ringsize} moves from $\mstar = \ringSizeHand$ to
$\mstar = \ringSizeAuthor$.

\subsection{Experiment 2: detection evasion}
\label{sec:e2}

\textbf{Setup.} We inject colluders at $\colluderFrac$\% of the reviewer pool
using each bidding strategy of \S\ref{sec:taxonomy}, and run the four detector
families over the resulting bid and assignment data. Every detector is
thresholded at a common $\evasionFPR$\% false positive rate, since a program
chair's binding constraint is how many reviewers they can afford to investigate.
We report $F_1$ over colluder identification and the assignment success rate,
the fraction of collusive bids producing the intended assignment, over
$\evasionSeeds$ instances.

\textbf{Results.} In Table~\ref{tab:detection} no detector family exceeds
$F_1 = \fOneBestOverall$ against any strategy. Cycle detection is the only
signal that identifies dyads at all ($F_1 = \fOneNaiveCycle$ against naive
bidding), and the distributed ring reduces it to $\fOneDistCycle$ by adding one
member. The bid-frequency test scores $\fOneNaiveFreq$ against every dyadic
strategy: once partners are screened for topical proximity, a collusive bid sits
inside the reviewer's normal bidding range and nothing anomalous remains to
measure.

Fig.~\ref{fig:success} gives the other half. A collusive bid raises the
probability of landing on the target paper from $\rho_0 = \rhoZero$ to
$\succNaive$ for naive bidding and to $\succAff$ when affinity manipulation is
added, a factor of over two hundred in the latter case, and the combined
strategy reaches $\succComb$ while holding every detector at or below
$\fOneCombWorst$. Evasion and effectiveness are usually in tension and here they
are not, because affinity awareness raises $\rho$ and lowers detectability for
the same reason: it makes the assignment genuinely well matched as the system
measures it.

One result runs against the taxonomy's prediction. Camouflage slightly
\emph{raises} the cycle-detection score, from $\fOneNaiveCycle$ to
$\fOneCamoCycle$, because a reviewer who bids more widely has more chances to
form an incidental two-cycle with an honest author. That the countermeasures
pull against each other is the strongest available argument for multi-signal
detection.

\begin{table}[htbp]
\centering
\caption{Detection $F_1$ at a common $\evasionFPR$\% false positive rate, mean
over $\evasionSeeds$ instances. Lower is better for the attacker; the best
(lowest) entry in each column is bold.}
\label{tab:detection}
\small
\setlength{\tabcolsep}{4pt}
\begin{tabular}{lcccc}
\toprule
Strategy & Bid freq. & Dense subgr. & Cycle det. & Text anom. \\
\midrule
Naive          & \cellNaiveFreq & \cellNaiveDense & \cellNaiveCycle & \cellNaiveText \\
Camouflaged    & \cellCamoFreq  & \cellCamoDense  & \cellCamoCycle  & \cellCamoText  \\
Distributed    & \cellDistFreq  & \cellDistDense  & \cellDistCycle  & \cellDistText  \\
Affinity-aware & \cellAffFreq   & \cellAffDense   & \cellAffCycle   & \cellAffText   \\
Combined       & \cellCombFreq  & \cellCombDense  & \cellCombCycle  & \cellCombText  \\
\bottomrule
\end{tabular}
\end{table}

\subsection{Experiment 3: downstream impact}
\label{sec:impact}

\textbf{Setup.} We hold the conference instance fixed and vary the colluder
fraction and ring size, comparing against a zero-collusion run on the same seed.
Honest reviewers score honest papers identically across the two arms by
construction, so every difference is attributable to the collusion. We measure
the acceptance rate of colluders' papers, the honest papers displaced, and the
ground-truth quality on both sides of each swap.

\begin{figure*}[htbp]
  \centering
  \begin{subfigure}[t]{0.48\textwidth}
    \includegraphics[width=\textwidth]{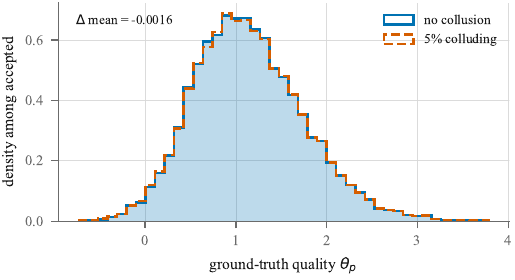}
    \caption{The accepted set in aggregate.}
    \label{fig:qual-agg}
  \end{subfigure}\hfill
  \begin{subfigure}[t]{0.48\textwidth}
    \includegraphics[width=\textwidth]{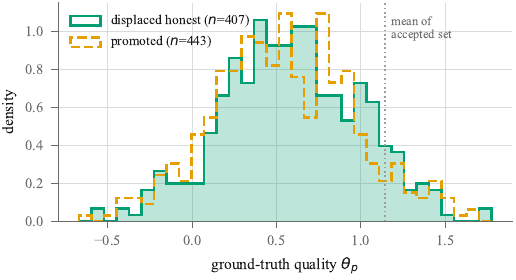}
    \caption{The papers that changed hands.}
    \label{fig:qual-swap}
  \end{subfigure}
  \caption{Ground-truth quality with and without collusion at a
  $\colluderFrac$\% colluder fraction. In (\subref{fig:qual-agg}) the two
  distributions are almost identical and mean quality moves by $\qualityDrop$.
  In (\subref{fig:qual-swap}) the displaced honest papers and the papers
  promoted in their place occupy the same narrow band at the acceptance
  threshold. The attack is invisible to any aggregate audit precisely because it
  operates where the two populations overlap.}
  \label{fig:quality}
\end{figure*}

\textbf{Results.} A bare dyad, the smallest arrangement possible and the one the
recruitment posts describe, raises the acceptance probability of a participant's
paper from $\acceptBase$\% to $\acceptDyad$\%, a gain of $\boostDyad$ percentage
points (Fig.~\ref{fig:impact-ring}). This is the quantity an individual reviewer
weighs when deciding whether to answer a recruitment post, and it requires no
organization beyond one exchange of messages. The gain peaks at
$\boostBestRing$ points at ring size $\bestRingSize$ and then falls, the
interior optimum of Proposition~\ref{prop:ringsize} appearing in the simulation:
past a point, additional partners compete for the same three review slots and
add detection exposure without adding inflated reviews. At a $\colluderFrac$\%
colluder fraction the combined strategy pushes $\dispCombFive$ honest papers out
of the accepted set and camouflaged dyads $\dispCamoFive$; at $10$\% the
combined figure is $\dispCombTen$ (Fig.~\ref{fig:impact-disp}). Distributed
rings displace fewer ($\dispDistFive$) because each member receives one collusive
review rather than $k-1$, the price of evading cycle detection.

\textbf{The harm is distributional, not aggregate.} Fig.~\ref{fig:qual-agg}
shows what a venue would observe, and the answer is essentially nothing. Mean
ground-truth quality of the accepted set falls by $\qualityDrop$ on a
standard-normal scale and the tenth percentile by $\qualityPTenDrop$, leaving
the two distributions visually indistinguishable even though tens of papers
changed hands. We expected a thickened lower tail and did not find one: both
sides of every swap sit near the acceptance threshold, so a displaced honest
paper and the promoted paper replacing it have almost the same latent quality,
which Fig.~\ref{fig:qual-swap} shows directly. This is not reassurance. No
aggregate audit will detect the attack, a venue can be substantially manipulated
while every summary statistic it publishes stays clean, and the entire cost
falls on the identifiable authors of the $\dispCombFive$ displaced papers.
Collusion at these scales does not make the conference worse on average; it
makes it arbitrary at the margin, where most submissions live.

\subsection{Experiment 4: defense effectiveness}
\label{sec:e4}

\textbf{Setup.} We apply three defenses to the same instances: cycle-free
assignment at $k=2$~\cite{ref:cyclefree}, randomized assignment with a cap on
per-pair assignment probability~\cite{ref:jecmen2020}, and both together. For
each we measure the residual collusion success rate and the cost to honest
assignments, the latter as the mean affinity of assignments to non-colluding
reviewers relative to the unconstrained optimum.

\begin{figure*}[htbp]
  \centering
  \begin{subfigure}[t]{0.48\textwidth}
    \includegraphics[width=\textwidth]{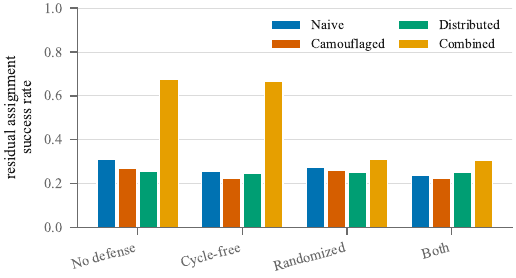}
    \caption{Residual assignment success by strategy.}
    \label{fig:def-resid}
  \end{subfigure}\hfill
  \begin{subfigure}[t]{0.48\textwidth}
    \includegraphics[width=\textwidth]{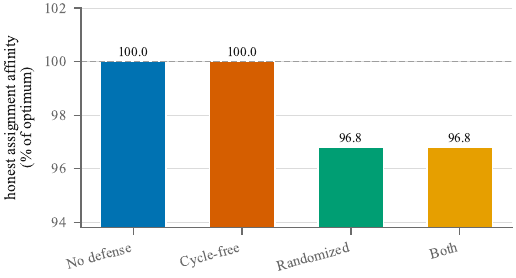}
    \caption{Cost to honest assignments.}
    \label{fig:def-cost}
  \end{subfigure}
  \caption{Defense effectiveness over $\impactSeeds$ instances. Cycle-free
  assignment is nearly free (\subref{fig:def-cost}) but only reaches dyads, and
  only those dyads whose two directions were both assigned. Randomization is the
  only defense in (\subref{fig:def-resid}) that reduces the combined strategy.}
  \label{fig:defenses}
\end{figure*}

\textbf{Results.} Cycle-free assignment at $k=2$ reduces naive dyadic success
from $\residNoneNaive$ to $\residCycleNaive$ and camouflaged dyads from
$\residNoneCamo$ to $\residCycleCamo$, leaves distributed rings untouched
($\residNoneDist$ to $\residCycleDist$), and barely moves the combined strategy
($\residNoneComb$ to $\residCycleComb$), retaining $\qualityCycle$\% of optimal
honest affinity because two-cycles are rare among honest assignments.

That the defense targeting the arrangement the posts describe is evaded by
adding one member to it is expected. The second feature is not: even against the
dyad it was designed for, cycle-free assignment removes only about a fifth of
collusive assignments. The constraint forbids a two-cycle, which requires
\emph{both} directions to be assigned, and at a realized per-direction
probability near $\rhoNone$ most dyads never reach that state. A colluder
assigned to the partner's paper while the partner is not assigned to theirs
violates no cycle constraint and is left free to submit the inflated review.
Cycle-free assignment prevents mutual favors from being consummated in the same
venue but not one-directional ones, and Theorem~\ref{thm:sustain} says
one-directional favors are exactly what a repeated relationship is built on.

Randomized assignment reduces success across every strategy rather than one
family, and does the most damage to the strongest attack: the combined strategy
falls from $\residNoneComb$ to $\residRandComb$, a larger absolute reduction
than any other cell, at $\qualityRand$\% retained honest affinity. Both
defenses together reach $\residBothComb$ at $\qualityBoth$\% retained quality.

The more consequential quantity is not in Fig.~\ref{fig:defenses} and follows
from Proposition~\ref{prop:tolerance}. Under no defense the realized assignment
probability for the combined strategy is $\rho = \rhoNone$, which by
\eqref{eq:kmin} lets colluders run a $k$-strike rule with $k \ge \kMinNone$
while \eqref{eq:kmax} permits up to $k \le \kMaxNone$, so the partnership can
police itself. Under randomization $\rho$ falls to $\rhoRand$, pushing
$k_{\min}$ to $\kMinRand$ against a $k_{\max}$ of $\kMaxRand$, and the feasible
set is empty: no rule both deters a partner from quietly withholding their bid
and avoids destroying honest partnerships through unlucky runs of
non-assignment. The residual rates in Fig.~\ref{fig:def-resid} therefore
overstate what randomization leaves standing among genuine strangers, since they
assume colluders cooperate, the assumption the mechanism undermines.

%% file: sections/07_discussion.tex
\section{Discussion}
\label{sec:discussion}

\subsection{Levers available to a program chair}

The model identifies five levers, and they are not equally available.

\emph{Detection probability $d$} is the lever the field has worked on, and
Theorem~\ref{thm:dstar} says it works once it clears $\dstar$ and does little
below it. A detector twice as good as its predecessor, while both sit well below
$\dstar = \dStar$, changes nothing about the equilibrium, so detection work
should be evaluated against the threshold rather than against relative
improvement. \S\ref{sec:e2}, where nothing exceeds
$F_1 = \fOneBestOverall$, places current venues far from it.

\emph{Assignment probability $\rho$} is the lever we would prioritize, on the
strength of Proposition~\ref{prop:tolerance} and \S\ref{sec:e4}: it enters
$\deltastar$ directly and separately destroys the colluders' ability to police
each other, the second effect free once the first is paid for. The existing
literature prices randomized assignment against its quality cost while counting
only the first~\cite{ref:jecmen2020,ref:jecmen2022}, so the mechanism is better
than its own analysis claims. It works through the colluders' beliefs about
$\rho$ as much as through $\rho$ itself, since what a partnership needs is
confidence that non-assignment signifies bad luck rather than betrayal, so a
venue adopting a per-pair cap should publish it: concealing the cap forfeits the
enforcement effect, the larger of the two.

\emph{Monitoring intensity $\mu$} at Stage~1 is the only lever acting before a
bid exists, by Proposition~\ref{prop:platform}, and methods for identifying
coordinated behavior on public platforms are well developed. A venue that states
it monitors those
channels, and that attributed posts carry a sanction, raises $n^\star$ for
everyone reading the post rather than only for those it catches; the deterrent
need not be effective, only credible.

\emph{Sanction magnitude $\pi$ and reported-first-mover loss $L$} appear in
\eqref{eq:dstar} and \eqref{eq:qstar}, and their effect saturates, since $L$
appears in both numerator and denominator of $\qstar$. \emph{Effort cost $c_e$}
is the lever most under discussion and least under control: policies restricting
machine assistance raise it for reviewers who comply, but the author-drafted
variant of \S\ref{sec:taxonomy} reaches $c_e \approx 0$ without using a machine
at all.

In the near term, then, adopt randomized assignment with an explicit per-pair
cap, publish it, and state a monitoring policy for public recruitment channels.
In the medium term, build detection that fuses bidding, scoring, assignment, and
authorship signals, since \S\ref{sec:e2} shows
each alone is evadable by a cheap countermeasure while the countermeasures are
mutually constraining. In the longer term, the $b$ term in every threshold here
is the value of a single top-venue acceptance, which no mechanism internal to
the review process can change.

\subsection{Ethics}

This paper describes attacks on a system its authors depend on. The recruitment
threads in Fig.~\ref{fig:screenshots} are public. We typeset them rather than
reproduce screenshots, because the platform is identifiable from its interface,
and we translate and lightly paraphrase so that no thread can be recovered by
searching a quoted string. Handles, avatars, timestamps, group identifiers, and
venue names are redacted, we name no individuals or groups, and we report
neither the platform nor the search terms that locate the threads. The taxonomy in \S\ref{sec:taxonomy} describes
strategies more effective than the published detector evaluations assume; we
include it because a defense cannot be evaluated against an unspecified
adversary, and because the strategies are discussed openly in the recruitment
channels and, for affinity manipulation, already published~\cite{ref:hsieh2025}.
Lowering $q$ amounts to running sting operations against a venue's own
reviewers, which we note in \eqref{eq:qstar} without recommending.
Appendix~\ref{app:ethics} gives the full statement.

\subsection{Limitations}

The results rest on simulation, because no venue releases the bidding,
assignment, or affinity data a direct measurement would need.
Table~\ref{tab:params} constrains submission and reviewer counts, load, score
distributions, and acceptance rates to published values, and leaves $b$, $\pi$,
$\mu$, $q$, and $\delta_r$ unconstrained. We sweep those rather than assert them,
and emphasize the thresholds and the directions of the comparative statics
rather than absolute magnitudes; the estimate of where current venues sit in
Fig.~\ref{fig:phase} is the weakest quantitative claim in the paper. The model
also assumes agents best-respond, which cuts in a specific direction: an
unsophisticated colluder is easier to detect than our model's, so the evasion
results characterize the upper end of adversary capability rather than the
typical case. Appendix~\ref{app:limitations} treats the fixed-pair repeated
game, the single-venue assumption, and the treatment of $d$ as exogenous against
an adversary who adapts.

%% file: sections/08_conclusion.tex
\section{Conclusion}
\label{sec:conclusion}

Reviewers recruit strangers on public platforms to bid on each other's
submissions and return inflated scores. We gave the first game-theoretic model
of how such an arrangement forms, where prior work models how an already
established group behaves, and the model says it should not form at all:
reciprocal inflation is strictly dominated at the reviewing stage for every
parameter setting, and the game unravels back to no recruitment. What sustains it is the recruitment platform and the conference
calendar acting together: reviews of one's own paper are always observed,
deadlines recur every few months, and group membership carries reputation. It
becomes sustainable exactly when the effective discount factor clears
$\deltastar = (d\pi + c_e)/b + c_b/(\rho b)$, which yields a detection threshold
$\dstar$ above which nothing sustains it. Mapping four bidding and four reviewing
strategies against four detector families then adds three findings the theory
does not supply. No detector family exceeds $F_1 = \fOneBestOverall$ once
partners are screened for topical proximity, and camouflage, ring distribution,
and affinity manipulation defeat a different family each, so no single signal
covers the countermeasures to the others. A two-person arrangement is already
worth $\boostDyad$ percentage points of acceptance probability. And the harm is
distributional rather than aggregate, with $\dispCombFive$ honest papers
displaced while mean accepted quality moves by $\qualityDrop$, so no aggregate
audit would reveal it. The defense implication is not a new detector: by
Proposition~\ref{prop:tolerance}, lowering the probability that a collusive bid
produces an assignment attacks the colluders' ability to police each other, not
merely their expected payoff, because a partner who fails to be assigned is
indistinguishable from one who never bid.

%% file: sections/09_appendix.tex
\section{Proofs}
\label{app:proofs}

Throughout, $g = \rho b$ and $w = \rho(\kappa + c_e) + c_b$ as in
\eqref{eq:gw}, with $\kappa = d\pi$.

\subsection{Proof of Lemma~\ref{lem:pd}}

Conditional on mutual assignment, the payoffs are those of Table~\ref{tab:pd}.
Fix $j$'s action. If $j$ plays $\actI$, then $i$'s payoff from $\actI$ is
$b - \kappa - c_e$ and from $\actH$ is $b$, a difference of $-(\kappa + c_e)$.
If $j$ plays $\actH$, then $i$'s payoff from $\actI$ is $-\kappa - c_e$ and from
$\actH$ is $0$, again a difference of $-(\kappa + c_e)$. Since
$\kappa + c_e > 0$, $\actH$ strictly dominates $\actI$ regardless of $j$'s
action and regardless of $b$. By symmetry the same holds for $j$, so
$(\actH,\actH)$ is the unique Nash equilibrium and it is in strictly dominant
strategies. It yields $(0,0)$, which is Pareto-dominated by
$(b-\kappa-c_e,\,b-\kappa-c_e)$ whenever $b > \kappa + c_e$. \hfill$\square$

\begin{remark}
  The independence of the dominance argument from $b$ is the substantive content.
  The benefit $b$ enters a player's payoff only through the \emph{partner's}
  action and therefore appears with the same coefficient in both branches of the
  comparison. Raising the value of a publication makes collusion more attractive
  to enter, by raising $g$, but does not make reciprocation individually rational
  once entered.
\end{remark}

\subsection{Proof of Theorem~\ref{thm:unravel}}

We solve the four stages by backward induction, establishing strictness at each
step so that the equilibrium is unique.

\emph{Stage 4.} By Lemma~\ref{lem:pd}, $\actH$ is strictly dominant at every
node in the Stage~4 subgame, so both parties report honestly whenever the
subgame is reached.

\emph{Stage 3.} Consider party $i$ choosing whether to place a collusive bid on
$P_j$. By the previous step, $j$ reports honestly if assigned, so $i$'s expected
benefit from the bid is $\rho \cdot 0 = 0$ while its cost is $c_b > 0$. Not
bidding yields $0$. Hence not bidding is strictly optimal. The same argument
applies to $j$. Note this step does not depend on $i$'s beliefs about whether
$j$ bid, so the conclusion holds at every node of $i$'s Stage~3 information set.

\emph{Stage 2.} By the previous step, no collusive bid is placed on any path, so
the continuation value of a formed partnership is $0$. The first mover's payoff
from revealing is $q \cdot 0 - (1-q)L = -(1-q)L$, and from withdrawing is $0$.
For $q < 1$ and $L > 0$ withdrawing is strictly optimal; for $q = 1$ the two are
equal and revealing is weakly optimal but yields nothing, so no collusive bid
follows in either case. The responder's choice is off the equilibrium path;
sequential rationality requires only that reporting be optimal there, which
holds whenever $R > 0$.

\emph{Stage 1.} The initiator's expected payoff from posting is at most
$\left[1-(1-\lambda)^n\right]\cdot 0 - c_s - \mu\pi_s < 0$, against $0$ from not
posting, so not posting is strictly optimal.

Each step is strict, so the profile is the unique subgame-perfect equilibrium.
No step invoked the magnitudes of $b$, $\rho$, $q$, or $\pi$, so the conclusion
holds for all of them. \hfill$\square$

\subsection{Proof of Theorem~\ref{thm:sustain}}

Consider the infinitely repeated game with per-round discount factor
$\delta_r \in (0,1)$ and the grim-trigger profile: play the collusive action
(bid, and report $\actI$ if assigned) in round $t$ if and only if no observed
betrayal has occurred in rounds $1,\dots,t-1$, where an observed betrayal is a
round in which the partner was assigned to one's own paper and reported $\actH$;
otherwise play the honest action forever.

\emph{Cooperation value.} Under mutual cooperation the per-round expected payoff
is
\[
  u_C = \underbrace{\rho b}_{g} - \underbrace{\left[\rho(\kappa+c_e) + c_b\right]}_{w},
\]
since the partner is assigned to one's paper with probability $\rho$ and inflates
when assigned, while one's own bid costs $c_b$ and produces an inflated review,
with its sanction risk and effort, with probability $\rho$. Hence
$V_C = (g-w)/(1-\delta_r)$.

\emph{Best deviation.} By the one-shot deviation principle it suffices to
consider a single-round deviation. The available deviations are: withhold the
bid; bid but report $\actH$ if assigned; or both. Withholding the bid saves
$c_b$ and, since assignment probability drops to $\rho_0$, saves
$\rho(\kappa + c_e)$ up to a term of order $\rho_0$, which we take to be zero.
Reporting $\actH$ when assigned saves $\kappa + c_e$ in that event. Doing both
saves $w$ in total, which is the maximum. In the deviation round the partner has
already committed, so the deviator still collects $g$ in expectation. Both forms
of deviation are detected: reporting $\actH$ when assigned is directly observed,
and withholding the bid is caught by the assignment-based rule of
Proposition~\ref{prop:tolerance} at $k=1$. Continuation after the trigger is
$0$, since mutual honest play yields no collusive payoff and, by
Theorem~\ref{thm:unravel} applied to the continuation, no party re-enters. Hence
$V_D = g$.

\emph{Condition.} Cooperation is sustainable iff $V_C \ge V_D$:
\[
  \frac{g - w}{1-\delta_r} \;\ge\; g
  \iff g - w \;\ge\; g(1-\delta_r)
  \iff \delta_r g \;\ge\; w ,
\]
that is $\delta_r \ge w/g = \deltastar$. Substituting \eqref{eq:gw},
\[
  \deltastar = \frac{\rho(\kappa + c_e) + c_b}{\rho b}
  = \frac{\kappa + c_e}{b} + \frac{c_b}{\rho b}
  = \frac{d\pi + c_e}{b} + \frac{c_b}{\rho b}.
\]
The profile is subgame perfect because the punishment phase is a Nash
equilibrium of the stage game by Lemma~\ref{lem:pd}. \hfill$\square$

\subsection{Proof of Corollary~\ref{cor:comparative}}

Differentiating $\deltastar = (d\pi + c_e)/b + c_b/(\rho b)$:
$\partial \deltastar/\partial d = \pi/b > 0$,
$\partial \deltastar/\partial \pi = d/b > 0$,
$\partial \deltastar/\partial c_e = 1/b > 0$,
$\partial \deltastar/\partial c_b = 1/(\rho b) > 0$,
$\partial \deltastar/\partial \rho = -c_b/(\rho^2 b) < 0$, and
$\partial \deltastar/\partial b = -\left[(d\pi + c_e) + c_b/\rho\right]/b^2 < 0$.
\hfill$\square$

\subsection{Proof of Proposition~\ref{prop:tolerance}}

\emph{Deterrence.} Under the $k$-strike rule, a party who withholds the bid is
assigned to the partner's paper with probability $\rho_0 \approx 0$, so the
strike counter increments every round and the relationship terminates after
exactly $k$ rounds. During those $k$ rounds the deviator collects $g$ per round
rather than $g - w$, since the partner continues to cooperate. Hence
\[
  V_{D3} = g\sum_{t=0}^{k-1}\delta_r^t = g\,\frac{1-\delta_r^k}{1-\delta_r},
  \qquad
  V_C = \frac{g-w}{1-\delta_r}.
\]
Requiring $V_C \ge V_{D3}$ gives $g - w \ge g(1-\delta_r^k)$, that is
$\delta_r^k g \ge w$, or $\delta_r^k \ge \deltastar$. Taking logarithms and
noting $\ln\delta_r < 0$ reverses the inequality:
$k \le \ln\deltastar/\ln\delta_r = k_{\max}$. Setting $k=1$ recovers
Theorem~\ref{thm:sustain}.

\emph{False termination.} Under mutual cooperation the partner is not assigned
in a given round with probability $1-\rho$, independently across rounds. The
rule fires on a run of $k$ consecutive such rounds, which has probability
$(1-\rho)^k$. Requiring this to be at most $\varepsilon$ gives
$k \ln(1-\rho) \le \ln\varepsilon$, and since $\ln(1-\rho)<0$,
$k \ge \ln\varepsilon/\ln(1-\rho) = k_{\min}$.

\emph{Feasibility.} A rule exists iff $k_{\min} \le k_{\max}$, that is
\[
  \frac{\ln\varepsilon}{\ln(1-\rho)} \;\le\; \frac{\ln\deltastar}{\ln\delta_r}.
\]
As $\rho \to 0^+$ we have $\ln(1-\rho) \to 0^-$, so $k_{\min}\to\infty$;
simultaneously $\deltastar \to \infty$ through the $c_b/(\rho b)$ term, so the
sustainability condition of Theorem~\ref{thm:sustain} fails outright. Both
effects act in the same direction. \hfill$\square$

\subsection{Proof of Proposition~\ref{prop:trust}}

\emph{Responder.} Having received a real identifier, the responder compares
reciprocating, which yields the continuation value $W$ of a genuine partnership,
against reporting, which yields the one-time payoff $R$. Sharing a fake
identifier yields $0$, since without a real identifier from the responder the
first mover cannot bid on the responder's paper and no partnership forms. Hence
reciprocation is optimal iff $W \ge R$.

\emph{First mover.} Anticipating reciprocation from a genuine responder, the
first mover's expected payoff from revealing is $qW - (1-q)L$ and from
withdrawing is $0$, where the signaling costs are sunk at this node. Revealing
is optimal iff $qW \ge (1-q)L$, that is $q(W+L) \ge L$, or
$q \ge L/(W+L) = \qstar$.

\emph{Monotonicity.} $\partial \qstar/\partial W = -L/(W+L)^2 < 0$ and
$\partial \qstar/\partial L = W/(W+L)^2 > 0$. Since
$W = (g-w)/(1-\delta_r) \to \infty$ as $\delta_r \to 1^-$ whenever $g > w$, we
have $\qstar \to 0$. \hfill$\square$

\subsection{Proof of Proposition~\ref{prop:platform}}

The initiator meets at least one responder with probability
$1-(1-\lambda)^n$, and conditional on meeting one obtains $qW-(1-q)L$ by
Proposition~\ref{prop:trust}. Posting is optimal iff
\[
  \left[1-(1-\lambda)^n\right]\left(qW-(1-q)L\right) \;\ge\; c_s + \mu\pi_s .
\]
Write $\Gamma = qW-(1-q)L$ and assume $\Gamma > c_s + \mu\pi_s$, which is
necessary for any $n$ to work. Rearranging,
$(1-\lambda)^n \le 1 - (c_s+\mu\pi_s)/\Gamma$, and taking logarithms with
$\ln(1-\lambda)<0$,
\[
  n \;\ge\; \frac{\ln\!\left(1-(c_s+\mu\pi_s)/\Gamma\right)}{\ln(1-\lambda)}
  \;=\; n^\star .
\]
Both $\mu$ and $\pi_s$ increase the numerator's argument toward $1$ from below,
hence increase $n^\star$. \hfill$\square$

\subsection{Proof of Proposition~\ref{prop:ringsize}}

With $m$ partners and per-review detection probability $d(m) = d_0 + \beta m$,
the per-round payoff is
\[
  \begin{aligned}
    \Pi(m) & = m\left[\rho b - \rho\left(\pi(d_0+\beta m) + c_e\right) - c_b\right] \\
           & = m\,\alpha - \rho\pi\beta m^2 ,
  \end{aligned}
\]
where $\alpha = \rho(b-c_e) - c_b - \rho\pi d_0$. This is strictly concave in
$m$ with $\Pi'(m) = \alpha - 2\rho\pi\beta m$, so the unconstrained maximizer is
$\mstar = \alpha/(2\rho\pi\beta)$, which is \eqref{eq:mstar}, and is positive
iff $\alpha > 0$. Then
$\partial \mstar/\partial c_e = -\rho/(2\rho\pi\beta) = -1/(2\pi\beta) < 0$,
$\partial \mstar/\partial \beta < 0$, and
$\partial \mstar/\partial b = 1/(2\pi\beta) > 0$. In practice $m$ is an integer
and bounded by the reviewer's load $L_i$, so the operative optimum is
$\min\{L_i, \lfloor \mstar \rceil\}$. \hfill$\square$

\subsection{Proof of Theorem~\ref{thm:dstar}}

By Theorem~\ref{thm:sustain}, cooperation is sustainable iff
$\delta_r \ge (d\pi + c_e)/b + c_b/(\rho b)$. Multiplying by $b$ and
rearranging,
\[
  \begin{aligned}
    d\pi        & \;\le\; \delta_r b - c_e - \frac{c_b}{\rho}                           \\
    \iff\quad d & \;\le\; \frac{1}{\pi}\left(\delta_r b - c_e - \frac{c_b}{\rho}\right)
    = \dstar(\delta_r).
  \end{aligned}
\]
Since $\dstar$ is increasing in $\delta_r$ and $\delta_r < 1$, the supremum over
admissible discount factors is $\dstar(1) = (b - c_e - c_b/\rho)/\pi$, above
which no discount factor sustains cooperation. \hfill$\square$

\section{Extended Simulation Details}
\label{app:sim}

Table~\ref{tab:params} lists the parameters that set the scale of the simulation
and the ranges swept, and the subsections below give the generative detail behind
them. Venue parameters are calibrated to published statistics, attack and game
parameters are swept over the stated ranges, and the five marked \emph{assumed}
are the ones no public source constrains. Appendix~\ref{app:sensitivity} varies
the assumed dispersions, which are the two that carry a conclusion.

\begin{table}[t]
  \centering
  \caption{Simulation parameters. Conference parameters are calibrated to
    published ICLR/NeurIPS statistics; game parameters are swept over the stated
    ranges except where a default is used.}
  \label{tab:params}
  \small
  \setlength{\tabcolsep}{4pt}
  \begin{tabular}{llp{2.4cm}}
    \toprule
    Parameter                             & Value          & Source                           \\
    \midrule
    \multicolumn{3}{l}{\emph{Venue, calibrated}}                                              \\
    Submissions $N$                       & 6000           & ICLR 2024 \cite{ref:openreview}  \\
    Reviewers $M$                         & 9000           & ICLR 2024 \cite{ref:openreview}  \\
    Reviews per paper $k$                 & 3              & venue policy                     \\
    Reviewer load $L_i$                   & 5              & venue policy                     \\
    Subject areas                         & 50             & venue call for papers            \\
    Score scale                           & 1--10          & venue policy                     \\
    Score noise $\sigma$                  & 1.2            & fitted \cite{ref:openreview}     \\
    Acceptance rate                       & 25\%           & ICLR 2024                        \\
    Bid weight (eager/willing)            & 1.0 / 0.5      & \cite{ref:tpms}                  \\
    Bid participation                     & 60\%           & calibration target               \\
    Eager bids per reviewer               & $\le 8$        & calibration target               \\
    \midrule
    \multicolumn{3}{l}{\emph{Attack, swept}}                                                  \\
    Colluder fraction                     & 1, 3, 5, 10\%  & swept                            \\
    Ring size $k$                         & 2, 4, 6, 8, 10 & swept                            \\
    Inflation $\Delta s$                  & $+\inflation$  & swept                            \\
    Partner affinity screen               & 97th pct.      & \S\ref{sec:e1}                   \\
    \midrule
    \multicolumn{3}{l}{\emph{Game, swept}}                                                    \\
    Detection prob.\ $d$                  & $[0, 0.30]$    & swept                            \\
    Effort cost $c_e$                     & $[0, 0.60]\,b$ & swept                            \\
    Trust $q$                             & $[0,1]$        & swept                            \\
    Discount $\delta_r$                   & 0.85           & 4--6 deadlines/yr                \\
    Sanction $\pi$                        & $5\,b$         & assumed                          \\
    Bid cost $c_b$                        & $0.02\,b$      & assumed                          \\
    Report loss $L$                       & $3\,b$         & assumed                          \\
    Spread of $b$, $\sigma_b$             & 0.35           & assumed, \S\ref{app:sensitivity} \\
    Spread of $\delta_r$, $\sigma_\delta$ & 0.10           & assumed, \S\ref{app:sensitivity} \\
    \bottomrule
  \end{tabular}
\end{table}

\subsection{Generating a conference}

Papers and reviewers occupy a shared simplex over $T = 50$ subject areas. Two
popularity vectors are drawn from a Dirichlet with concentration $1.2$, one for
submission volume and one for reviewer expertise, and the reviewer vector is
mixed with the paper vector at weight $0.7$. Reviewer supply therefore does not
track submission volume area by area, which produces over-bid and under-bid
regions of the conference. The under-bid region matters for the attack: it is
where a collusive bid faces little competition and wins the assignment on a
weaker margin than it would elsewhere.

Reviewer mixtures are Dirichlet draws around the reviewer popularity vector with
concentration $0.35$, low enough that reviewers are specialized. Authorship is
decided before paper content: a fraction $0.75$ of reviewers author one
submission each, and a paper's mixture is drawn with concentration $2.0$ around
a convex combination of its author's mixture, at weight $0.75$, and the global
popularity vector. Generating papers independently of their authors would place
an author's submission far from their own expertise, so a partner's paper would
be one the colluder could not plausibly review, and every strategy in
\S\ref{sec:taxonomy} would appear weaker than it is.

Affinity is the cosine similarity of the two mixtures plus Gaussian noise of
standard deviation $0.05$, clipped to $[0,1]$. It stands in for the
text-matching score a venue computes from publication records. Latent quality
$\theta_p$ is standard normal and independent of topics, which is what allows
\S\ref{sec:impact} to measure displacement: no observational dataset contains
ground-truth quality. The only conflict of interest the venue detects
automatically is authorship of the paper under review. Dyad partners are
strangers by Definition~\ref{def:dyad}, so no declared conflict exists between
them, which is what the attack relies on.

\subsection{Honest bidding}

Each reviewer inspects the $60$ submissions with the highest affinity to their
own profile. Interest in an inspected paper is its affinity plus Gaussian noise
of standard deviation $0.12$, so two reviewers with the same profile do not bid
on the same papers. A reviewer bids \emph{eager} on an inspected paper with
probability $\sigma(12(\text{interest} - 0.55))$ where $\sigma$ is the logistic
function, up to a cap of $8$ eager bids, and the next papers by interest become
\emph{willing} bids. A fraction $0.40$ of reviewers place no bids at all.

The participation rate and the cap are calibration targets rather than
modeling conveniences. If every high-affinity reviewer bid on every paper in
their area, a bid would confer no advantage in the assignment and there would be
nothing for a colluder to manipulate. The targets are a mean near six eager bids
per paper with roughly a quarter of papers receiving two or fewer, which
reproduces the concentration reported for real venues.

\subsection{Injecting collusion}

Only reviewers who author a submission are eligible, since a colluder needs a
paper to trade. Rings are formed among the eligible pool by a screen on mutual
affinity: writing $A_{ij}$ for reviewer $i$'s affinity to $j$'s submission, the
pair $(i,j)$ is admissible when the symmetrized affinity
$\tfrac{1}{2}(A_{ij} + A_{ji})$ exceeds the $97$th percentile of each side's own
affinity distribution. Rings are then grown greedily from a random order,
sampling each next member among the eight most similar admissible candidates
rather than taking the most similar, so that rings are not deterministic in the
seed.

The screen is a substantive constraint rather than a detail. A recruitment post
names a subject area and the people who answer one work in it, so partners are
topically adjacent while remaining strangers in the sense of
Definition~\ref{def:dyad}. Pairing uniformly at random would understate every
strategy, because a bid on a topically distant paper loses the assignment on
affinity alone and is conspicuous besides. The screen is also why not every
willing reviewer finds a usable partner, and why the number of completed rings
falls as the ring grows.

Which members a colluder bids on follows the strategy. Under the naive,
camouflaged and affinity-aware strategies every member bids on every other
member's submission; under the distributed and combined strategies the ring is a
directed cycle in which each member bids on exactly one successor, so the bid
graph contains no two-cycle.

Camouflaging strategies add $r$ cover bids per collusive bid, $r = 4$ by default.
Cover papers are ranked by $A_{i\cdot} + 1.5\,s_{\cdot}$, where $s$ is the
maximum topical similarity to any targeted paper. Both terms are necessary.
Ranking on similarity to the target alone drags the whole bid set away from the
reviewer's area and makes the profile more anomalous, not less; ranking on the
reviewer's own affinity alone leaves the target as isolated as it was. The
bridging term is what stops the collusive bid from being an outlier within its
own bid set. Appendix~\ref{app:detection} reports the effect of $r$.

Affinity-aware strategies add $0.15$ to the affinity of a targeted pair, clipped
at one, and the assignment is solved against the manipulated matrix. This models
profile curation and matched background text as an additive boost rather than as
an actual editing procedure, which is a limitation: it fixes the gain a colluder
achieves instead of deriving it from what the text permits.

\subsection{Assignment}

Problem~\eqref{eq:assignment} is solved as a linear program over a sparsified
candidate set: the $30$ highest-scoring reviewers per paper under
affinity-plus-bid-weight, together with every reviewer-paper pair carrying a bid,
less all authorship pairs. Retaining every bid pair is required for validity.
Dropping a low-affinity collusive bid during sparsification would defend against
the attack being measured, and would do so silently.

Coverage is an equality constraint at $k = 3$ reviews per paper, load an
inequality at $L_i = 5$, and edge variables are bounded by a per-pair cap. The
constraint matrix of a bipartite $b$-matching is totally unimodular, so the
relaxation has an integral optimum; HiGHS returns one and the assignment is read
off at $x > 0.5$.

Randomized assignment keeps the fractional solution with the per-pair cap set to
$0.35$ and samples an integral assignment from it, processing papers in random
order and drawing reviewers without replacement from the fractional marginals
while skipping reviewers already at their load. The cap is therefore enforced
approximately rather than exactly, since the repair needed for feasibility can
perturb a marginal. Cycle-free assignment is a post-hoc repair that finds pairs
of reviewers assigned to each other's submissions and drops the lower-scoring of
the two assignments. It is a heuristic at $k=2$, not an exact integer program;
the exact formulation is NP-hard, and \S\ref{sec:e4} reports it as implemented.

\subsection{Reviewing and decisions}

An assigned reviewer reports $5.5 + 1.5\,\theta_p + 1.2\,\varepsilon$ clipped to
the $1$ to $10$ scale. The noise term $\varepsilon$ is not drawn sequentially. It
is a SplitMix64 hash of the triple (seed, reviewer, paper) mapped to a standard
normal through the Box-Muller transform, so an honest reviewer's score on an
honest paper is identical across runs. Sequential draws from a shared generator
would not have this property: injecting collusion consumes a different number of
draws, the honest scores would diverge between a collusive run and its baseline,
and every displacement measurement in \S\ref{sec:impact} would be contaminated
by that divergence rather than by the collusion.

A collusive review occurs wherever a colluder is assigned to a paper they placed
a collusive bid on. Uniform maximum sets the score to the top of the scale,
calibrated inflation adds $\Delta s = \inflation$, and constructive positive adds
$0.8\,\Delta s$. Author-drafted reviewing is not a fourth scoring arm: it differs
from constructive positive in effort cost and in resistance to content-based
scrutiny rather than in the score it produces, so it enters the analysis through
$c_e$ in \S\ref{sec:equilibrium} and not through the score distribution.
Decisions accept the top $25$\% of submissions by mean score, with unreviewed
papers ranked last.

\subsection{Detectors and operating points}

\emph{Bid frequency} scores a reviewer by the isolation of their most isolated
eager bids. Isolation of a bid is the negative of its maximum topical similarity
to any other paper in the same bid set, plus the shortfall of its affinity below
the reviewer's own $90$th percentile, and the reviewer's score is the sum of the
two largest. This is a stronger test than counting low-affinity bids, and it is
the reason a camouflaging colluder must draw cover bids from around the target
rather than from their own usual area.

\emph{Dense subgraph} builds the reviewer-author graph, joining $r$ and $a$ when
$r$ bids on a paper $a$ authored and $a$ also reviews, and scores a reviewer by
the edge density of the subgraph induced by their neighborhood, weighted by
$1 + $ the number of neighbors who bid back. We score locally rather than
running Charikar peeling globally because peeling returns one densest subgraph,
and the object of interest here is a population of many small disjoint rings.

\emph{Cycle detection} orients the same graph and counts short directed cycles
through each reviewer. The default bound of $2$ is the constraint that deployed
cycle-free assignment enforces; Appendix~\ref{app:detection} reports bounds $3$
and $4$.

\emph{Text anomaly} scores an assignment by how far its realized affinity falls
below what the reviewer normally receives, taking the reviewer's own top-$L_i$
affinities as the reference distribution and summing the positive $z$-scores over
their assigned papers. The reference has to be the top of the distribution rather
than the reviewer's mean, since almost any assignment beats a reviewer's average
paper and a mean-based test flags nothing.

Every family is thresholded at the $1 - \text{FPR}$ quantile of its scores over
honest reviewers, which makes the families comparable at a fixed investigation
budget. That is the right common currency because a program chair's binding
constraint is how many reviewers they can afford to examine, not any intrinsic
property of a score. Ties at zero are handled explicitly, by flagging only scores
strictly above the threshold, so a family that assigns the same score to most of
the pool receives no credit for flagging all of them. The implementations are
deliberately reasonable rather than deliberately weak: where a family admits an
obvious strengthening we take it, because the claim being made is that these
signals fail against specific countermeasures, and that claim is only worth
making against detectors that work on the naive adversary.

\subsection{Determinism and cost}

Conference generation, honest bidding, collusion injection and assignment draw
from independent substreams of one seed sequence per run, and review noise is
hashed as described above. Honest bids are therefore identical whether or not
collusion is injected. A collusive run and its baseline share the generated
conference, so the two arms differ in the injected bids and in nothing else. One
conference at the reported size takes about $2.5$ seconds and peaks near
$1.8$ GB; the full suite of experiments reported here runs in about $20$ minutes
on one core, and re-running it reproduces every macro in the paper exactly.

\section{Sensitivity Analyses}
\label{app:sensitivity}

Four inputs to \S\ref{sec:experiments} are set by assumption rather than by
evidence: the dispersion of $b$ and $\delta_r$ across the population, the false
positive rate at which the detector families are compared, the acceptance rate,
and the per-reader response rate of Proposition~\ref{prop:platform}. This
appendix varies each in turn and reports which conclusions move.

\subsection{Heterogeneity in $b$ and $\delta_r$}

Table~\ref{tab:sens-het} varies the two spreads. Setting both to zero is the
informative case, because it isolates the explanation \S\ref{sec:e1} gives for
the distance between the measured transitions and the closed forms. That
explanation is congestion, and if it is right the distance should survive an
identical population. It does, and it widens: the $d$ transition falls to
$\sensDTransHomog$ against $\dstar = \dStar$. The mechanism is visible in the
same row. With one agent type every agent makes the same choice, so wherever
collusion is sustainable at all the fixed point is $\sensFracHomog$\% of the
population, and the congestion multiplier is evaluated at its maximum.
Dispersion spreads agents across their individual thresholds, holds the fixed
point below one, and weakens the congestion term.

\begin{table}[t]
  \centering
  \caption{Measured transitions and effort-regime collusion levels against the
    spread of $b$ and of $\delta_r$, over $\nAgents$ agents. The three rightmost
    columns are the quantities \S\ref{sec:e1} quotes. The closed forms are evaluated
    at the population mean and so do not move with either spread.}
  \label{tab:sens-het}
  \input{generated/sens_het}
\end{table}

Across every setting swept, the $d$ transition stays within
$[\sensDTransSpreadLo, \sensDTransSpreadHi]$ and the $q$ transition within
$[\sensQTransSpreadLo, \sensQTransSpreadHi]$, so neither panel of
Fig.~\ref{fig:equilibrium} depends on the assumed dispersion. The $c_e$
transition does, ranging over
$[\sensCeTransSpreadLo, \sensCeTransSpreadHi]$, and almost all of that movement
comes from the spread of $b$ rather than of $\delta_r$. The reason is in
\eqref{eq:deltastar}: effort enters as $c_e/b$, so dispersion in $b$ rescales
the effective effort cost across the population.

\S\ref{sec:e1} quotes no $c_e$ transition, but it does quote the collusion level
at three effort regimes, and those move: hand-written effort sustains
$\sensRegimeHandHi$\% of the population at $\sigma_b = 0$ and
$\sensRegimeHandLo$\% at $\sigma_b = \sensBSpreadHi$. The ordering that
Corollary~\ref{cor:comparative} asserts survives every configuration, including
the homogeneous one, where all three regimes saturate and the comparative static
is invisible at the population level while remaining exactly as
Corollary~\ref{cor:comparative} states it pointwise. What the spread of $b$ sets
is the magnitude of the effort effect, not its direction, and the percentages in
\S\ref{sec:e1} should be read as conditional on $\sigma_b = 0.35$.

\subsection{Detector operating points}

The main text compares the four families at a single $\evasionFPR$\% false
positive rate. Table~\ref{tab:sens-fpr} gives the best $F_1$ any of the four
attains, at budgets from $\gridFPRLo$\% to $\gridFPRHi$\% of the honest pool.

\begin{table}[t]
  \centering
  \caption{Best $F_1$ over the four detector families, by false positive rate,
    mean over $\gridSeeds$ instances. Rows are bidding strategies. The operating
    point changes only the threshold, so detector scores are computed once and
    re-thresholded.}
  \label{tab:sens-fpr}
  \input{generated/grid_fpr}
\end{table}

The bound the paper claims is not an artifact of the operating point. Against
the combined strategy the best $F_1$ over all four families and all five budgets
is $\gridBestCombAnywhere$, below the $\fOneBestOverall$ of
Table~\ref{tab:detection}, and the distributed strategy stays lower still. A
ring longer than the cycle bound produces no cycle to count at any threshold,
and camouflage leaves nothing for the frequency test to find at any threshold, so
loosening the budget does not recover either.

The detectability of the \emph{unsophisticated} attacker does depend on the
budget, sharply. Naive bidding moves from $F_1 = \fOneNaiveCycle$ at
$\evasionFPR$\% to $\gridBestAnywhere$ at $10$\%. The jump is a threshold
effect with a specific cause: only $\cycleHonestShallowPct$\% of honest reviewers
carry any two-cycle at all (Table~\ref{tab:cycle}), so a budget below that
fraction places the threshold inside the honest nonzero mass, while a budget
above it places the threshold at zero and flags every reviewer holding a single
cycle. The $\evasionFPR$\% convention therefore sits just below a discontinuity
in what cycle detection can do. A venue willing to examine one reviewer in ten
identifies naive and camouflaged dyads at more than triple the $F_1$ of
Table~\ref{tab:detection}, which is worth knowing, and gains nothing against the
strategies the recruitment threads actually describe.

\subsection{Acceptance rate}

The accept/reject rule is a top-$k$ on realized paper means, so the rate can be
varied after the scores are fixed. Table~\ref{tab:sens-accept} does that from
one collusive run and one baseline run per instance, which means the arms differ
in the threshold and in nothing else; the $25$\% row reproduces
\S\ref{sec:impact} exactly.

\begin{table}[t]
  \centering
  \caption{Impact against the acceptance rate, camouflaged dyads at
    $\colluderFrac$\% of the pool, mean over $\impactSeeds$ instances. Boost is in
    percentage points over the same colluders' baseline rate.}
  \label{tab:sens-accept}
  \input{generated/sens_acceptance}
\end{table}

No rate in the range removes the effect. The boost grows with the acceptance
rate, from $\sensBoostRateLo$ percentage points at $\sensRateLo$\% to
$\sensBoostRateHi$ at $\sensRateHi$\%, and the count of displaced honest papers
grows with it as well, from $\sensDispRateLo$ to $\sensDispRateHi$, since a
larger accepted set offers more decisions to disturb. Selectivity therefore
reduces the absolute harm a fixed population of colluders can do while leaving
the colluders' advantage intact, and a venue at $\sensRateLo$\% still admits
$\sensAcceptRateLo$\% of colluding submissions against
$\sensAcceptRateHi$\% at $\sensRateHi$\%.

\subsection{Audience size}

Proposition~\ref{prop:platform} leaves the per-reader response rate $\lambda$
free, and \S\ref{sec:equilibrium} asserts that a public group clears
$n^{\star}$ at any plausible value. Table~\ref{tab:sens-platform} quantifies
that. Even at $\sensLambdaLo$\%, one reply per hundred readers,
$n^{\star} = \nStarLambdaLo$ readers, and at $\sensLambdaHi$\% it is
$\nStarLambdaHi$. A channel with tens of members clears the threshold by orders
of magnitude, so audience size is not what restrains posting. The binding
condition is $\Gamma > c_s + \mu\pi_s$, that a partnership be worth more than
the cost of soliciting one, which is where a venue's monitoring intensity $\mu$
enters and why it is the only lever in the model that acts before any bid is
placed.

\begin{table}[t]
  \centering
  \caption{Minimum audience $n^{\star}$ of Proposition~\ref{prop:platform} against
    the per-reader response rate $\lambda$, at the default payoffs. Values below one
    mean a single expected reply justifies the post.}
  \label{tab:sens-platform}
  \input{generated/sens_platform}
\end{table}

\subsection{Tolerance frontier}

Fig.~\ref{fig:frontier} plots the feasibility frontier of
Proposition~\ref{prop:tolerance} over $(\rho, \delta_r)$. The region in which no
$k$-strike rule both deters free-riding and survives unlucky runs of
non-assignment is exactly the region a venue is trying to reach by lowering
$\rho$, and it widens as $\delta_r$ falls.

\begin{figure}[t]
  \centering
  \includegraphics[width=\columnwidth]{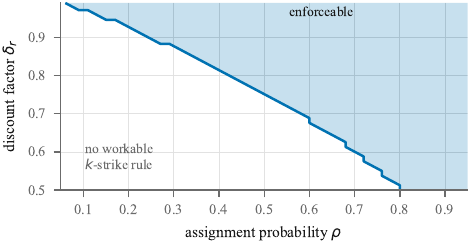}
  \caption{Feasibility frontier for the $k$-strike rule of
    Proposition~\ref{prop:tolerance}. Inside the shaded region
    $k_{\min} \le k_{\max}$ and the partnership can police itself; outside it, no
    tolerance setting deters a partner who quietly withholds their bid without
    also destroying honestly cooperating partnerships.}
  \label{fig:frontier}
\end{figure}

\section{Additional Detection Results}
\label{app:detection}

Table~\ref{tab:evasion} states the outcome the model predicts for each pairing
of strategy and detector family. Table~\ref{tab:detection} in
\S\ref{sec:e2} reports the measured counterpart.

\begin{table}[t]
  \centering
  \caption{Strategy against detector family. \checkmark~denotes reliable evasion,
    $\sim$~partial evasion, \ding{55}~detection. Entries are predictions from the
    model; measured $F_1$ appears in Table~\ref{tab:detection}.}
  \label{tab:evasion}
  \small
  \setlength{\tabcolsep}{4pt}
  \begin{tabular}{lcccc}
    \toprule
                               & \rotatebox{55}{Bid freq.}  & \rotatebox{55}{Dense subgr.}
                               & \rotatebox{55}{Cycle det.} & \rotatebox{55}{Text anom.}                                             \\
    \midrule
    Naive                      & \ding{55} & \ding{55}   & \ding{55} & $\sim$     \\
    Camouflaged                & \checkmark                 & $\sim$                       & \ding{55} & $\sim$     \\
    Distributed                & \checkmark                 & $\sim$                       & \checkmark                 & $\sim$     \\
    Affinity-aware             & $\sim$                     & \ding{55}   & \ding{55} & \checkmark \\
    Camo.\ + dist.\ + affinity & \checkmark                 & \checkmark                   & \checkmark                 & \checkmark \\
    \bottomrule
  \end{tabular}
\end{table}

\subsection{Precision and recall}

Table~\ref{tab:grid} separates the two errors that Table~\ref{tab:detection}
compresses into $F_1$. The distinction is not cosmetic, because two entries in
the table would be read differently by a program chair than their $F_1$ suggests.
The bid-frequency family reaches precision $\gridFreqDistPrec$ against both the
distributed and the combined strategy, at recall $\gridFreqDistRec$ and
$\gridFreqCombRec$. When it fires it is never wrong, and it almost never fires.
Its $F_1$ of $\gridFreqDistFOne$ and $\gridFreqCombFOne$ invites discarding a
signal that at this operating point produces no false accusations at all, which
is the error a venue is least able to afford. Cycle detection against camouflage
is the opposite shape, precision $\gridCycleCamoPrec$ at recall
$\gridCycleCamoRec$, and it holds the highest $F_1$ in the table.

\begin{table*}[t]
  \centering
  \caption{Precision, recall and $F_1$ for every strategy against every detector
    family, at a common $\evasionFPR$\% false positive rate, mean over $\gridSeeds$
    instances. Table~\ref{tab:detection} reports the $F_1$ columns.}
  \label{tab:grid}
  \input{generated/grid_full}
\end{table*}

\subsection{How much camouflage}

\S\ref{sec:e2} notes that camouflage raises the cycle-detection score.
Table~\ref{tab:camo} gives the dose-response and shows the effect is monotone
over the whole range. As cover bids per collusive bid rise from $1$ to
$\camoRatioHi$, the collusive share of the bid vector falls from $\camoShareHi$
to $\camoShareLo$, which is what camouflage is for, while $F_1$ rises for both
graph-based families, from $\camoDenseLo$ to $\camoDenseHi$ for dense subgraph
and from $\camoCycleLo$ to $\camoCycleHi$ for cycle detection. The frequency test
stays at $\camoFreqHi$ throughout and the text test is unmoved.

\begin{table}[t]
  \centering
  \caption{Camouflage dose-response: cover bids per collusive bid against
    detection $F_1$ at $\evasionFPR$\% false positive rate and assignment success,
    dyads at $\colluderFrac$\% of the pool, mean over $\gridSeeds$ instances. Honest
    reviewers place $\bidsHonest$ eager bids on average. Ratio $0$ is the naive
    strategy.}
  \label{tab:camo}
  \input{generated/grid_camouflage}
\end{table}

Two mechanisms drive the graph columns. Each cover bid adds an edge to the
reviewer-author graph, so a heavily camouflaged colluder has more chances to
close an incidental cycle with an honest author; and the bid vector grows to
$\camoBidsHi$ entries against an honest mean of $\bidsHonest$, so the
neighborhood whose density the dense-subgraph test measures is larger than any
honest reviewer's. Camouflage also costs assignment success, which falls from
$\camoSuccHi$ to $\camoSuccLo$, because cover bids are eager bids competing for
the same reviewer load.

Camouflage therefore has an interior optimum and it lies at the low end. The
attacker's best choice is the smallest ratio that zeroes the frequency test,
which is the smallest ratio we test, and heavier camouflage is worse on every
axis measured: more visible to two families, no less visible to the other two,
and less likely to win the assignment. The description of camouflage as cheap in
\S\ref{sec:taxonomy} holds at the low ratio; the assumption that more of it is
better does not. We also note a signal none of the four families uses. Bid volume
is what heavy camouflage inflates, and a test on volume alone would separate the
$\camoRatioHi\times$ row from the honest population immediately. That is a
concrete instance of the multi-signal argument of \S\ref{sec:discussion}.

\subsection{Cycle detection beyond two}

A distributed ring evades a two-cycle constraint by construction, and the obvious
reply is to raise the bound. Table~\ref{tab:cycle} sweeps bounds
$2$ through $\cycleBoundHi$ against rings of $2$ through $\cycleRingHi$ members,
reporting $F_1$ alongside the fraction of honest reviewers who carry at least one
cycle at that bound.

\begin{table}[t]
  \centering
  \caption{Cycle detection at bound $L$ against ring size, distributed bidding at
    $\colluderFrac$\% of the pool, $\evasionFPR$\% false positive rate, mean over
    $\gridSeeds$ instances. \emph{honest} is the fraction of honest reviewers holding
    at least one cycle at that bound, which is what limits the threshold.}
  \label{tab:cycle}
  \input{generated/grid_cycle}
\end{table}

Raising the bound works when it matches the ring. A three-member ring goes from
$\cycleRingThreeShallow$ at bound $2$, which is evasion, to
$\cycleRingThreeMatched$ at bound $3$. It does not keep working. The same ring
falls to $\cycleRingThreeDeep$ at bound $\cycleBoundHi$, and a four-member ring
reaches only $\cycleRingFourMatched$ even though the bound matches it. The reason
is in the honest column: the share of honest reviewers carrying a cycle rises
from $\cycleHonestShallow$ at bound $2$ to $\cycleHonestDeep$ at bound
$\cycleBoundHi$. Longer cycles are ordinary in honest bidding, so a longer bound
admits far more honest mass than collusive signal, and a fixed false positive
budget then buys a threshold that separates less than it did.

The exchange favors the attacker on both sides. Lengthening the ring costs one
additional partner from the same recruitment thread, which
Proposition~\ref{prop:ringsize} prices, and it is available to anyone who can
read the thread. Matching it costs the venue an enumeration whose time per
instance grows from $\cycleSecShallow$ to $\cycleSecDeep$ seconds here and grows
faster than that with the bound, together with a false positive base that more
than doubles. At bound $\cycleBoundHi$ a ring of $\cycleRingHi$ still sits at
$\cycleFOneRingSixDeep$. Cycle detection is a bound-guessing game, and raising
the bound alone does not win it.

\subsection{On temporal dispersion}

A strategy that spreads collusive bids across the bidding period is not reported
here, and the reason is a property of the detector families rather than a gap in
the sweep. None of the four uses bid timing: each reads the bid set, the bid
graph, or the affinity of the resulting assignment, and all three are invariant
to the order in which bids arrive. A dispersed variant of any strategy in
Table~\ref{tab:grid} would reproduce its row exactly. Measuring dispersion needs
a fifth family built on timing and a calibrated arrival process for honest bids,
and no venue publishes bid timestamps. We record it in
Appendix~\ref{app:limitations} as future work, and note that arrival time is one
of the few coordination problems a pair cannot solve without communicating
further, so a timing-based family would attack the part of the arrangement that
\S\ref{sec:model} identifies as most exposed.

\section{Ethics Statement}
\label{app:ethics}

This appendix expands \S\ref{sec:discussion}.

\emph{Data collection.} The recruitment threads in Fig.~\ref{fig:screenshots}
are public. We collected no private messages, joined no groups under false
pretenses, and conducted no interaction with the people posting. The replies
shown are a selection: each thread carries more than we reproduce, and we kept
those that bear on the model.

\emph{Reproduction.} We typeset the threads rather than reproduce screenshots,
because the platform is identifiable from its interface chrome alone. The text
is translated and lightly paraphrased so that no thread can be recovered by
searching a quoted string, which a verbatim transcription would still permit
after the images were removed. Handles, avatars, timestamps, group identifiers,
and venue names are redacted, we name no individuals or groups, and we report
neither the platform nor the search terms that locate the threads. The threads
serve as evidence that the phenomenon exists and as motivation for the model; no
part of the analysis depends on their content.

\emph{Disclosure.} The attack taxonomy of \S\ref{sec:taxonomy} describes
strategies more effective than those the published detector evaluations assume.
We include it because a defense cannot be evaluated against an unspecified
adversary, and because the strategies are not novel to us: camouflage and ring
topologies are discussed openly in the recruitment channels themselves, and
affinity manipulation is published~\cite{ref:hsieh2025}. The disclosure balance
is the standard one in security research, and we judge it to favor publication.

\emph{Sting operations.} Lowering $q$, the probability that a responder is
genuine, appears in the model as a lever and amounts to a venue running sting
operations against its own reviewers. We note it for completeness in
\eqref{eq:qstar} and do not recommend it. It requires a venue to deceive its
volunteers, it would poison the recruitment channels in ways that are hard to
reason about, and Proposition~\ref{prop:trust} shows its effect is bounded by
$\qstar$ already being low for durable partnerships.

\section{Extended Limitations and Future Work}
\label{app:limitations}

The repeated game is between a fixed pair across rounds. Real partnerships form,
dissolve, and re-form through the platform, and a network formation model would
capture the reputation mechanism we describe informally in
\S\ref{sec:equilibrium} rather than folding it into $\delta_r$. We also model a
single venue: cross-venue collusion, in which a ring operates simultaneously at
several conferences that share no data, both raises the effective number of
rounds and lowers the effective detection probability, and we discuss it without
modeling it. Finally, we treat detection probability $d$ as exogenous, whereas a
venue that deploys a detector faces an adversary who adapts, and the fixed point
of that interaction is not the fixed point we compute.

Several parts of this paper are deliberately shallow and are being taken up
separately: a multi-signal detection framework built on the observation that the
countermeasures in \S\ref{sec:taxonomy} are
mutually constraining; an adaptive adversary analysis in which detector and
strategy co-evolve; the expertise-familiarity trade-off, which governs how much
room a venue has to exclude suspicious matches before review quality suffers;
cross-venue collusion, which our model says should be strictly easier to sustain
than single-venue collusion; and the empirical question of whether
author-drafted reviews leave a stylometric signature.

Two specific detector questions are left open by
Appendix~\ref{app:detection}. Bid timing is a signal none of the four families
uses, and it is the one dimension along which a pair cannot coordinate without
communicating further, so a timing-based family would bear on the stage of the
arrangement the model identifies as most exposed. Evaluating one requires a
calibrated arrival process for honest bids, which no venue publishes. Bid volume
is a signal our frequency family also does not use, and it is what heavy
camouflage inflates; the dose-response in Table~\ref{tab:camo} suggests that
volume and topical isolation together constrain the attacker more tightly than
either does alone, which is the multi-signal argument stated for a case where we
can measure both sides of it.

%% file: generated/sens_het.tex
% Generated by sim/plots/make_numbers.py. Do not edit by hand.
\small
\setlength{\tabcolsep}{4pt}
\begin{tabular}{lccccc}
\toprule
& $d$ & $q$ & \multicolumn{3}{c}{Collusion at $c_e$ (\%)} \\
\cmidrule(lr){4-6}
Spread & trans. & trans. & hand & machine & author \\
\midrule
\multicolumn{6}{@{}l}{\emph{no heterogeneity in either}} \\
$0$ & 0.060 & 0.702 & 100.0 & 100.0 & 100.0 \\
\addlinespace
\multicolumn{6}{@{}l}{\emph{spread of $b$, holding $\sigma_{\delta} = 0.10$}} \\
$0.000$ & 0.081 & 0.641 & 87.4 & 99.8 & 100.0 \\
$0.175$ & 0.080 & 0.640 & 79.5 & 98.1 & 99.3 \\
$0.350$ & 0.079 & 0.627 & 68.2 & 91.9 & 95.2 \\
$0.700$ & 0.076 & 0.573 & 51.9 & 75.1 & 80.8 \\
\addlinespace
\multicolumn{6}{@{}l}{\emph{spread of $\delta_r$, holding $\sigma_b = 0.35$}} \\
$0.000$ & 0.076 & 0.650 & 69.0 & 95.1 & 98.1 \\
$0.050$ & 0.077 & 0.642 & 68.9 & 94.1 & 97.3 \\
$0.100$ & 0.079 & 0.627 & 68.2 & 91.9 & 95.2 \\
$0.200$ & 0.081 & 0.585 & 65.4 & 86.5 & 89.9 \\
\addlinespace
\multicolumn{6}{@{}l}{\emph{closed form, at the population mean}} \\
any & 0.113 & 0.571 & --- & --- & --- \\
\bottomrule
\end{tabular}

%% file: generated/grid_fpr.tex
% Generated by sim/plots/make_numbers.py. Do not edit by hand.
\small
\setlength{\tabcolsep}{5pt}
\begin{tabular}{lccccc}
\toprule
Strategy & $1$\% & $2$\% & $5$\% & $10$\% & $20$\% \\
\midrule
Naive & 0.139 & 0.156 & 0.156 & 0.567 & 0.567 \\
Camouflaged & 0.261 & 0.322 & 0.322 & 0.551 & 0.551 \\
Distributed & 0.079 & 0.079 & 0.110 & 0.145 & 0.158 \\
Affinity-aware & 0.139 & 0.156 & 0.156 & 0.567 & 0.567 \\
Camo.+dist.+affinity & 0.127 & 0.158 & 0.216 & 0.228 & 0.221 \\
\bottomrule
\end{tabular}

%% file: generated/sens_acceptance.tex
% Generated by sim/plots/make_numbers.py. Do not edit by hand.
\small
\setlength{\tabcolsep}{5pt}
\begin{tabular}{lcccc}
\toprule
Acceptance & Colluder & Boost & Displaced & Undeserved \\
rate & rate (\%) & (pp) & papers & papers \\
\midrule
$15$\% & 17.4 & +2.4 & 39 & 43 \\
$20$\% & 23.0 & +3.1 & 46 & 50 \\
$25$\% & 28.2 & +3.2 & 51 & 55 \\
$30$\% & 33.0 & +3.5 & 59 & 64 \\
$40$\% & 43.4 & +4.7 & 68 & 72 \\
\bottomrule
\end{tabular}

%% file: generated/sens_platform.tex
% Generated by sim/plots/make_numbers.py. Do not edit by hand.
\small
\setlength{\tabcolsep}{4pt}
\begin{tabular}{lc}
\toprule
Response rate $\lambda$ & $n^{\star}$ (readers) \\
\midrule
$1$\% & 4.88 \\
$2$\% & 2.43 \\
$5$\% & 0.96 \\
$10$\% & 0.47 \\
$20$\% & 0.22 \\
\bottomrule
\end{tabular}

%% file: generated/grid_full.tex
% Generated by sim/plots/make_numbers.py. Do not edit by hand.
\small
\setlength{\tabcolsep}{5pt}
\begin{tabular}{lcccccccccccc}
\toprule
& \multicolumn{3}{c}{Bid frequency} & \multicolumn{3}{c}{Dense subgraph} & \multicolumn{3}{c}{Cycle detection} & \multicolumn{3}{c}{Text anomaly} \\
\cmidrule(lr){2-4} \cmidrule(lr){5-7} \cmidrule(lr){8-10} \cmidrule(lr){11-13}
Strategy & P & R & $F_1$ & P & R & $F_1$ & P & R & $F_1$ & P & R & $F_1$ \\
\midrule
Naive & 0.000 & 0.000 & 0.000 & 0.063 & 0.075 & 0.069 & 0.362 & 0.100 & 0.156 & 0.043 & 0.052 & 0.047 \\
Camouflaged & 0.000 & 0.000 & 0.000 & 0.148 & 0.185 & 0.164 & 0.530 & 0.232 & 0.322 & 0.028 & 0.033 & 0.030 \\
Distributed & 1.000 & 0.041 & 0.079 & 0.098 & 0.127 & 0.110 & 0.092 & 0.019 & 0.032 & 0.044 & 0.056 & 0.049 \\
Affinity-aware & 0.000 & 0.000 & 0.000 & 0.063 & 0.075 & 0.069 & 0.362 & 0.100 & 0.156 & 0.077 & 0.094 & 0.085 \\
Camo.+dist.+affinity & 1.000 & 0.068 & 0.127 & 0.186 & 0.261 & 0.216 & 0.254 & 0.075 & 0.115 & 0.056 & 0.071 & 0.063 \\
\bottomrule
\end{tabular}

%% file: generated/grid_camouflage.tex
% Generated by sim/plots/make_numbers.py. Do not edit by hand.
\small
\setlength{\tabcolsep}{3pt}
\begin{tabular}{lccccccc}
\toprule
Cover & Collusive & Eager & \multicolumn{4}{c}{$F_1$ by family} & Assign. \\
\cmidrule(lr){4-7}
bids & share & bids & freq. & dense & cycle & text & success \\
\midrule
$1\times$ & 0.258 & 6.7 & 0.000 & 0.096 & 0.205 & 0.038 & 0.296 \\
$2\times$ & 0.186 & 7.6 & 0.000 & 0.117 & 0.251 & 0.032 & 0.285 \\
$4\times$ & 0.124 & 9.5 & 0.000 & 0.164 & 0.322 & 0.030 & 0.271 \\
$8\times$ & 0.078 & 13.4 & 0.000 & 0.245 & 0.416 & 0.034 & 0.232 \\
\bottomrule
\end{tabular}

%% file: generated/grid_cycle.tex
% Generated by sim/plots/make_numbers.py. Do not edit by hand.
\small
\setlength{\tabcolsep}{4pt}
\begin{tabular}{lcccccc}
\toprule
& \multicolumn{2}{c}{$L=2$} & \multicolumn{2}{c}{$L=3$} & \multicolumn{2}{c}{$L=4$} \\
\cmidrule(lr){2-3} \cmidrule(lr){4-5} \cmidrule(lr){6-7}
Ring size & $F_1$ & honest & $F_1$ & honest & $F_1$ & honest \\
\midrule
$2$ & 0.156 & 0.068 & 0.228 & 0.119 & 0.138 & 0.180 \\
$3$ & 0.032 & 0.067 & 0.248 & 0.118 & 0.144 & 0.180 \\
$4$ & 0.041 & 0.068 & 0.106 & 0.118 & 0.141 & 0.180 \\
$6$ & 0.031 & 0.068 & 0.098 & 0.119 & 0.095 & 0.180 \\
\bottomrule
\end{tabular}